\documentclass[final]{siamltex}
\usepackage{amsmath,amsfonts,amssymb,graphicx,stmaryrd,color}

\def\RR{\mathbb{R}}

\def\ee{{(11)}}
\def\et{{(12)}}
\def\te{{(21)}}
\def\tt{{(22)}}

\title{Fixed-order H-infinity control for interconnected systems using delay differential algebraic equations}

\author{ Suat Gumussoy \and Wim Michiels\thanks{Department of Computer Science, Katholieke Universiteit Leuven, Belgium, (\texttt{\{Suat.Gumussoy,Wim.Michiels\}@cs.kuleuven.be})} }

\newtheorem{remark}[theorem]{Remark}

\newtheorem{example}[theorem]{Example}
\newtheorem{algorithm}[theorem]{Algorithm}
\newtheorem{assumption}[theorem]{Assumption}

\makeatletter
\def\Ddots{\mathinner{\mkern1mu\raise\p@
\vbox{\kern7\p@\hbox{.}}\mkern2mu
\raise4\p@\hbox{.}\mkern2mu\raise7\p@\hbox{.}\mkern1mu}}
\makeatother

\newcommand{\Hi}{{\cal H}_\infty}

\newcommand{\R}{\mathbb{R}}
\newcommand{\C}{\mathbb{C}}

\newcommand{\UU}{\mathbf{U}}
\newcommand{\VV}{\mathbf{V}}
\newcommand{\w}{\omega}

\gdef \RR{{\Bbb R}}

\gdef \ZZ{{\Bbb Z}}

\begin{document}
\maketitle

\begin{abstract}
We analyze and design H-infinity controllers for general time-delay systems with time-delays in systems' state, inputs and outputs. We allow the designer to choose the order of the controller and to introduce constant time-delays in the controller. The closed-loop system of the plant and the controller is modeled by a system of delay differential algebraic equations (DDAEs). The advantage of the DDAE modeling framework is that any interconnection of systems and controllers prone to various types of delays can be dealt with in a systematic way, without using any elimination technique. We present a predictor-correct algorithm for the H-infinity norm computation of systems described by DDAEs. Instrumental to this we analyze the properties of the H-infinity norm. In particular, we illustrate that it may be sensitive with respect to arbitrarily small delay perturbations. Due to this sensitivity, we introduce the strong H-infinity norm which explicitly takes into account small delay perturbations, inevitable in any practical control application. We present a numerical algorithm to compute the strong H-infinity norm for DDAEs. Using this algorithm and the computation of the gradient of the strong H-infinity norm with respect to the controller parameters, we minimize the strong H-infinity norm of the closed-loop system based on non-smooth, non-convex optimization methods. By this approach, we tune the controller parameters and design H-infinity controllers with a prescribed order or structure.
\end{abstract}

\section{Introduction}
In many control applications, robust controllers are desired to achieve stability and performance requirements under model uncertainties and exogenous disturbances \cite{zhou}. The design requirements are usually defined in terms of $\Hi$ norms of  closed-loop transfer functions including the plant, the controller and weights for uncertainties and disturbances. There are robust control methods to design the optimal $\Hi$ controller for linear finite dimensional multi-input-multi-output (MIMO) systems based on Riccati equations and linear matrix inequalities (LMIs), see e.g.~\cite{DGKF, GahinetApkarian_HinfLMI} and the references therein. The order of the controller designed by these methods is typically larger or equal then the order of the plant. This is a restrictive condition for high-order plants, since low-order controllers are desired in a practical implementation. The design of fixed-order or low-order $\Hi$ controller can be translated into a non-smooth, non-convex optimization problem.  Recently fixed-order $\Hi$ controllers have been successfully designed for finite dimensional linear-time-invariant (LTI) MIMO plants using a direct optimization approach \cite{suatHIFOO}. This approach allows the user to choose the controller order and tunes the parameters of the controller to minimize the $\Hi$ norm under consideration. An extension to a class of retarded time-delay systems has been described in \cite{bfgbookchapter}.

In this work we design a fixed-order or fixed-structure $\Hi$ controller in a feedback connection with a time-delay system. The closed-loop system is a delay differential algebraic system and its state-space representation is written as
\begin{equation}\label{system}
\left\{\begin{array}{l}
E \dot x(t)= A_0 x(t)+\sum_{i=1}^m A_i x(t-\tau_i) +B w(t), \\
z(t)=C x(t).
\end{array}\right.
\end{equation}
The time-delays $\tau_i$, $i=1,\ldots,m$ are positive real numbers and the capital letters are real-valued matrices with appropriate dimensions. The input $w$ and output $z$ are disturbances and signals to be minimized to achieve design requirements and some of the system matrices include the controller parameters.

The system with the closed-loop equations (\ref{system}) represents all interesting cases of the feedback connection of a time-delay plant and a controller. The transformation of the closed-loop system to this form can be easily done by first augmenting the system equations of the plant and controller. As we shall see, this augmented system can subsequently be brought in the form (\ref{system}) by introducing slack variables to eliminate input/output delays and direct feedthrough terms in the closed-loop equations. Hence, the resulting system of the form (\ref{system}) is obtained directly without complicated elimination techniques that may even not be possible in the presence of time-delays.


As we shall see, the $\Hi$ norm of DDAEs may be sensitive to arbitrarily small delay changes. Since small modeling errors are inevitable in any practical design we are interested in the smallest upper bound of the $\Hi$ norm that is insensitive to small delay changes. Inspired by the concept of strong stability of neutral equations \cite{have:02}, this leads us to the introduction of the concept of  \emph{strong $\Hi$ norms} for DDAEs,  Several properties of the strong $\Hi$ norm are shown and a computational formula is obtained. The theory derived can be considered as the dual of the theory of strong stability as elaborated in \cite{fridman2,have:02,extraE,TW-report-286,Michiels:2005:NEUTRAL,Michiels:2007:MULTIVARIATE} and the references therein.

 In addition, a level set algorithm for computing strong $\Hi$ norms is presented.
 Level set methods  rely on the property that
the frequencies at which a singular value of the transfer function equals a given value (the level) can be directly obtained from the solutions of a linear eigenvalue problem with Hamiltonian symmetry (see, e.g.~\cite{boydbala,boydbala2,steinbuch,byers}), allowing a two-directional search for the global maximum. For time-delay systems this eigenvalue problem is infinite-dimensional.
 %
 Therefore, we adopt a predictor-corrector approach, where the prediction step involves a finite-dimensional approximation of the problem, and the correction serves to remove the effect of the discretization error on the numerical result. The algorithm is inspired by the algorithm for $\Hi$ computation for time-delay systems of retarded type as described in~\cite{wimsimax}. However, a main difference lies in the fact that the robustness w.r.t.~small delay perturbations needs to be explicitly addressed.

The  numerical algorithm for the norm computation is subsequently applied to the design of  $\Hi$ controllers by a direct optimization approach. In the context of control of  LTI systems it is well known that  $\Hi$ norms are in general  non-convex  functions of the controller parameters which arise as elements of the closed-loop system matrices. They are typically even not everywhere smooth, although they are differentiable almost everywhere~\cite{suatHIFOO}. These properties carry over to the case of strong $\Hi$ norms of DDAEs under consideration. Therefore, special optimization methods for non-smooth, non-convex problems are required. We will use
%
%
%
a combination of BFGS, whose favorable properties in the context of non-smooth problems have been reported in \cite{overtonbfgs}, bundle and gradient sampling methods, as implemented in the MATLAB code HANSO\footnote{Hybrid Algorithm for Nonsmooth Optimization, see~\cite{overtonhanso}}. The overall algorithm only requires the evaluation of the objective function, i.e.,~the strong $\Hi$ norm, as well as its derivatives with respect to the controller parameters whenever it is differentiable. The computation of the derivatives is also discussed in the paper.

The presented method is frequency domain based and builds on the eigenvalue based framework developed in~\cite{bookwim}.  Time-domain methods for the $\Hi$ control of DDAEs have been described in \cite{fridman} and the references therein, based on the construction of Lyapunov-Krasovskii functionals.

\smallskip

The structure of the article is as follows. In Section \ref{sec:motex} we illustrate the generality of the system description (\ref{system}). Preliminaries and assumptions are given in Section \ref{sec:prelim}. The definition and properties of the strong $\Hi$ norm of DDAE are given in Section \ref{sec:shinf}. The numerical algorithm to compute the strong $\Hi$ norm is described in detail in Section \ref{sec:comp_shinf}. Fixed-order $\Hi$ controller design is addressed in Section \ref{sec:design}. Section~\ref{sec:ex} is devoted to the numerical examples. In Section \ref{sec:conc} some concluding remarks are presented. Some technical lemmas and finite dimensional approximation of time-delay systems are given in Appendices~\ref{sec:Appendix2} and \ref{sec:findimapp} respectively.

\subsection*{Notations} The notations are
as follows:
\begin{tabbing}
  \= $j$\hspace{1.5cm} \=:  the imaginary unit \\
  \> $\C, \R$ \>: set of the complex and real numbers \\
  \> $\mathbb{N}$\>: set of natural numbers \\
  \> $\R^+,\R_0^+$ \> : set of  nonnegative and strictly positive real numbers \\
  \> $A^{*}$ \>: complex conjugate transpose of the matrix $A$ \\
  \> $A^{-T}$ \>: transpose of the inverse matrix of $A$ \\
  \> $A^{\bot}$ \>: matrix of full column rank whose columns  span \\
  \> \> \ \  the
   orthogonal complement of $A$   \\
  \> $I,I_n$ \> : identity matrix of appropriate dimensions, of dimensions $n\times n$ \\
  \> $0,0_n,0_{n\times m}$ \>: zero matrix with appropriate dimensions, with dimension $n \times n$,  \\
  \>  \>\ \ with dimensions $n \times m$ \\
  \> $\sigma_i(A)$ \>: i$^\mathrm{th}$ singular value of $A$,\ $\sigma_1(\cdot)\geq\sigma_2(\cdot)\geq \cdots$ \\
  \> $\Re(u)$ \> : real part of the complex number $u$ \\
  \> $\Im(u)$ \> : imaginary part of the complex number $u$ \\
  \> $\mathcal{D}(.)$ \> : domain of an operator \\
  \> $\mathcal{C}, \mathcal{L}_2$ \> : the space of continuous and square integrable complex functions,\\
  \> \> $\ \ $ i.e., $\mathcal{L}_2([-\tau_{\max},0],\C^n):=\{f:[-\tau_{\max},0] \rightarrow \C^n : \int_{-\tau_{\max}}^0 |f(t)|^2 dt <\infty\}$ \\

\> $\vec \tau\in\mathbb{R}^m$ \>: short notation for $(\tau_1,\ldots,\tau_m)$ \\
  \> $\mathcal{B}(\vec \tau,\epsilon)$ \> : open ball of radius $\epsilon\in\R^+$ centered at $\vec\tau\in(\R^+)^m$, \\
  \> \> \ \ $\mathcal{B}(\vec \tau,\epsilon):=\{\vec\theta\in(\R)^m : \|\vec\theta-\vec \tau\|<\epsilon\}$ \\
 \> $\left[\begin{array}{c|c}
 A & B \\
 \hline
 C & D \\
 \end{array}\right]
 $ \>: transfer function representation for $T(\lambda)=C(\lambda I - A)^{-1} B+D$ \\
\end{tabbing}

\section{Motivating examples} \label{sec:motex}

With some simple examples we illustrate the generality of the system description (\ref{system}).

\begin{example} \label{elim:connect}
Consider the feedback interconnection of the system
\[
\left\{\begin{array}{lll}
\dot x(t)&=&A x(t)+B_1 u(t)+B_2w(t),\\
 y(t)&=& C x(t)+D_1 u(t),\\
 z(t)&=& F x(t),
\end{array}\right.
\]
and the controller
\[
u(t)=K y(t-\tau).
\]
For $\tau=0$ it is possible to eliminate the output and controller equation, which results in the closed-loop system
\begin{equation}\label{elimination}
\left\{\begin{array}{lll}
\dot x(t)&=& A x(t)+B_1 K (I-D_1 K)^{-1} C x(t)+B_2 w(t), \\
z(t) & =& F x(t).
\end{array}\right.
\end{equation}
This approach is for instance taken in the software package HIFOO~\cite{Burke-hifoo}.
If $\tau\neq 0$, then the elimination is not possible any more. However, if we let $X=[x^T\ u^T y^T]^T$ we can describe the system by the equations
{\small\[
\left\{\begin{array}{l}
\left[\begin{array}{ccc}
I & 0 & 0 \\ 0 &0 & 0 \\ 0& 0&0
\end{array}\right]\dot X(t)=
\left[\begin{array}{ccc}
A & B_1 & 0 \\
C & D_1 &-I\\
0& I & 0
\end{array}\right] X(t)-
\left[\begin{array}{ccc}
0 & 0 & 0 \\
0 & 0 & 0\\
0& 0 & K
\end{array}\right] X(t-\tau) +
\left[\begin{array}{c}B_2\\0\\0
\end{array}\right] w(t),
\\
z(t)=\left[\begin{array}{cc c} F & 0&0  \end{array}\right]X(t),
 \end{array}\right.
 \]}
 which are of the form (\ref{system}).  Furthermore, the dependence of the matrices of the closed-loop system on the controller parameters, $K$, is still linear, unlike in (\ref{elimination}).
\end{example}
\begin{example} \label{elim:feedthru}
The presence of a direct feedthrough term from $w$ to $z$, as in
\begin{equation}\label{ex2}
\left\{\begin{array}{lll}
\dot x(t)&=& Ax(t)+A_1 x(t-\tau)+B w(t),\\
z(t)&=&F x(t)+D_2 w(t),
\end{array}\right.
\end{equation}
can be avoided by introducing a slack variable. If we let $X=[x^T\ \gamma_w^T]^T$, where $\gamma_w$ is the slack variable, we can bring (\ref{ex2}) in the form (\ref{system}):
\[
\left\{\begin{array}{l}
\left[\begin{array}{cc}
I & 0 \\ 0 &0
\end{array}\right]\dot X(t)=
\left[\begin{array}{cc}
A & 0 \\ 0 & -I
\end{array}\right] X(t) +
\left[\begin{array}{cc}
 A_1 & 0 \\ 0 & 0
\end{array}\right] X(t-\tau) +
\left[ \begin{array}{l} B\\ I
\end{array}\right] w(t),
\\
z(t)=[F\ D_2]\ X(t).
\end{array}\right.
\]
\end{example}
\begin{example} \label{elim:inputdelay}
The system
\[
\left\{\begin{array}{lll}
\dot x(t) &=& A x(t)+B_1 w(t)+B_2 w(t-\tau),\\
z(t)&=& C x(t),
\end{array}\right.
\]
can also be brought in the standard form (\ref{system}) by a slack variable. Letting $X=[x^T \gamma_w^T]^T$ we can express
\[
\left\{\begin{array}{lll}
\dot X(t) &=&
\left[\begin{array}{cc}
A & B_1 \\
0 & -I
\end{array}\right]
X(t)
+
\left[\begin{array}{cc}
0 & B_2 \\
0 & 0
\end{array}\right]
X(t-\tau)
+
\left[\begin{array}{c}
0 \\ I
\end{array}\right] w(t),
\\
z(t)&=& [C\ \  0]\ X(t).
\end{array}\right.
\]
In a similar way one can deal with delays in the output $z$.
\end{example}

Using the techniques illustrated with the above examples a broad class of interconnected
systems with delays can be brought in the form (\ref{system}), where the external
inputs $w$ and outputs $z$ stem from the performance specifications expressed in terms of
appropriately defined transfer functions. As a more realistic illustration, the feedback interconnection of any retarded type time-delay system $G$ and controller $K$ with the following state-space representations,

\begin{equation} \label{plant}
G:\left\{\begin{array}{l}
\dot x_G(t)=
 \sum_{i=0}^{m_a}        A^i   x_G(t-\tau_i^a)
+\sum_{i=0}^{m_{b_1}}    B_1^i   w(t-\tau_i^{b_1})
+\sum_{i=0}^{m_{b_2}}    B_2^i   u(t-\tau_i^{b_2}) \\

z(t)=
 \sum_{i=0}^{m_{c_1}}    C_1^i    x_G(t-\tau_i^{c_1})
+\sum_{i=0}^{m_{d_{11}}} D_{11}^i   w(t-\tau_i^{d_{11}})
+\sum_{i=0}^{m_{d_{12}}} D_{12}^i   u(t-\tau_i^{d_{12}}) \\

y(t)=
 \sum_{i=0}^{m_{c_2}}    C_2^i    x_G(t-\tau_i^{c_2})
+\sum_{i=0}^{m_{d_{21}}} D_{21}^i   w(t-\tau_i^{d_{21}})
+\sum_{i=0}^{m_{d_{22}}} D_{22}^i   u(t-\tau_i^{d_{22}}),
\end{array}\right.
\end{equation}

\begin{equation} \label{controller}
K:\left\{\begin{array}{l}
\dot x_K(t)=
 \sum_{i=0}^{m_{a_k}} A_K^i x_K(t-\tau_i^{a_k})
+\sum_{i=0}^{m_{b_k}} B_K^i   y(t-\tau_i^{b_k}) \\

u(t)=
 \sum_{i=0}^{m_{c_k}} C_K^i x_K(t-\tau_i^{c_k})
+\sum_{i=0}^{m_{d_k}} D_K^i   u(t-\tau_i^{d_k}), \\
\end{array}\right.
\end{equation}
can be written in the form of (\ref{system}) using similar techniques in the previous examples.

The price to pay for the generality of the framework is the increase of the dimension of the system, $n$, which affects the efficiency of the numerical methods. However, this is a minor problem in most applications because the delay difference equations or algebraic constraints are related to inputs and outputs, and
the number of inputs and outputs is usually much smaller than the number of state variables.

\section{Preliminaries} \label{sec:prelim}

\subsection*{Assumptions}
Let $\mathrm{rank}(E)=n-\nu$, with $\nu\leq n$, and let the columns
of matrix $U\in\RR^{n\times \nu}$, respectively $V\in\RR^{n\times \nu}$, be a (minimal) basis for
the left, respectively right nullspace, that is,
\[
U^T E=0,\ \ E V=0.
\]
Throughout the paper we make the following assumption.
\begin{assumption} \label{assumption}
The matrix $U^T A_0 V$ is nonsingular.
\end{assumption}

\medskip

In order to motivate Assumption \ref{assumption}, we note that the equations (\ref{system}) can be separated into coupled
delay differential and delay  difference equations. When we
define
\[
\mathbf{U}= \left[{U^{\perp}}\  U\right],\ \ \mathbf{V}=\left[V^{\perp}\ V\right],\
\]
a pre-multiplication of (\ref{system}) with $\UU^T$ and the substitution
\[
x=\VV\ [x_1^T\ x_2^T]^T,
\]
with $x_1(t)\in\RR^{n-\nu}$ and $x_2(t)\in\RR^\nu$, yield the coupled equations
\begin{equation}\label{coupled}
\left\{\begin{array}{ccl}
%
E^\ee \dot x_1(t)&=& \sum_{i=0}^m A_i^\ee x_1(t-\tau_i) +\sum_{i=0}^m A_i^\et x_2(t-\tau_i)+B_1 w(t), \\
0&=&A_0^\tt x_2(t)+ \sum_{i=1}^m A_i^\tt x_2(t-\tau_i) \\
&&\ \hspace*{3.1cm}+\sum_{i=0}^m A_i^\te x_1(t-\tau_i)+B_2 w(t), \\
z(t)&= &C_1 x_1(t)+C_2 x_2(t),
\end{array}\right.
\end{equation}
%
%
where
\[
\left.
\begin{array}{lll}
A_i^\ee= {U^{\perp}}^T A_i V^{\perp}, & A_i^\et= {U^{\perp}}^T A_i V,&\\
A_i^\te= {U}^T A_i V^{\perp}, & A_i^\tt= {U}^T A_i V, & i=0,\ldots,m,
\end{array}
\right.
\]
and
\[
E^\ee= {U^{\perp}}^T E V^{\perp},\ \   B_1={U^{\perp}}^T B,\ \ B_2= U^T B,\ \ C_1=C V^{\perp},\ \ C_2=C V.
\]
Matrix $E^\ee$ in (\ref{coupled}) is invertible, following from
\[
n-\nu=\mathrm{rank}(E)=\mathrm{rank}(\mathbf{U}^T E\mathbf{V})=\mathrm{rank}(E^{(11)}).
\]
In addition, matrix $A_0^{(22)}$ is invertible, following from
Assumption~\ref{assumption}.

The equations (\ref{coupled}), with $w\equiv 0$, are semi-explicit delay differential algebraic equations of index~1, because  delay differential equations are obtained by differentiating the second equation. This precludes the occurrence of impulsive solutions~\cite{fridman}. Moreover, the invertibility of $A_0^{(22)}$
 prevents that the equations are of \emph{advanced} type and, hence, non-causal. This further motivates
why Assumption~\ref{assumption} is natural in the delay case considered, although it restricts the index to
one (for a general treatment in the delay free case, see for instance~\cite{stykel} and the references therein).
\medskip

We further make the following assumption.
\begin{assumption} \label{assumption_sstab}
The zero solution of system (\ref{system}), with $w\equiv0$, is strongly exponentially stable.
\end{assumption}

Strong exponential stability refers to the fact that the asymptotic stability of the null solution is robust against small delay perturbations \cite{have:02,Michiels:2007:MULTIVARIATE}. Due to modeling errors and uncertainty, the delays in the  model are usually not exact and this type of stability is required in practice. The stability of the closed-loop system (\ref{system}) is a necessary assumption for $\Hi$ norm optimization since this norm is finite for stable systems only. We assume that parameters of a  controller are available such that the closed-loop system of the form (\ref{system}) is strongly exponentially stable. These parameters can, for instance, be found by minimizing the spectral abscissa of the closed-loop system (\ref{system}) using a non-smooth, non-convex optimization method \cite{suatHIFOO}. The overall $\Hi$ optimization of the closed-loop system (\ref{system}) can then be performed in two steps. First a fixed-order or fixed-structure controller strongly stabilizing the closed-loop system (\ref{system}) is designed. Next the controller parameters are tuned to minimize the $\Hi$ norm of the closed-loop system (\ref{system}) starting from the initial controller obtained in the first step. In this article we focus on the  $\Hi$ computation and optimization.

\subsection*{Transfer functions}

From (\ref{coupled}) we can write the transfer function of the system (\ref{system}) as
\begin{eqnarray}
\label{T} T(\lambda)&:=&C\left(\lambda E-A_0-\sum_{i=1}^m A_i e^{-\lambda \tau_i}\right)^{-1}B, \\
&=&[C_1 \ \ C_2]\left[\begin{array}{rr}\lambda E^\ee-A_{11}(\lambda) & -A_{12}(\lambda)\\
-A_{21}(\lambda) & -A_{22}(\lambda)
 \end{array}\right]^{-1} \left[\begin{array}{c}B_1\\ B_2 \end{array}\right], \label{transferblock}
\end{eqnarray}
with
\[
A_{kl}(\lambda)=\sum_{i=0}^m A_i^{(kl)} e^{-\lambda\tau_i},\ \ k,l\in\{1,2\}.
\]

The {\it asymptotic} transfer function of the system (\ref{system}) is defined as
\begin{eqnarray}
\label{Ta} T_a(\lambda) &:=&-C V \left(U^T A_0 V +\sum_{i=1}^m U^T A_i V e^{-\lambda\tau_i} \right)^{-1} U^TB\\
\nonumber &=&-C_2 A_{22}(\lambda)^{-1} B_2 .
\end{eqnarray}

The terminology stems from the fact that the transfer function $T$ and the asymptotic transfer function $T_a$ converge to each other for high frequencies. This is precisely stated in the following Proposition.
\begin{proposition}\label{propconverge}
$\forall\gamma >0$, $\exists\Omega>0$: $\sigma_{1}\left(T(j\w)-T_a(j\w)\right)<\gamma,\ \forall\w>\Omega$.
\end{proposition}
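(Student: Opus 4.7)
The plan is to exploit the block representation (\ref{transferblock}) via a Schur-complement style inversion, and show that the ``$(1,1)$ block'' becomes asymptotically negligible at high frequencies, so that only the algebraic part $T_a$ survives. Concretely, set $P(\lambda)=\lambda E^{\ee}-A_{11}(\lambda)$ and the Schur complement $\Sigma(\lambda)=-A_{22}(\lambda)-A_{21}(\lambda)P(\lambda)^{-1}A_{12}(\lambda)$. Assuming both are invertible (to be justified for $|\lambda|$ large enough), block matrix inversion yields
\begin{equation*}
T(\lambda)=C_1 P(\lambda)^{-1} B_1+\bigl(C_2-C_1 P(\lambda)^{-1}A_{12}(\lambda)\bigr)\Sigma(\lambda)^{-1}\bigl(B_2+A_{21}(\lambda)P(\lambda)^{-1}B_1\bigr).
\end{equation*}

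Next I would derive two uniform bounds along the imaginary axis. Since each $A_i^{(kl)}$ is a constant matrix and $|e^{-j\omega\tau_i}|=1$, the matrix-valued symbols $A_{kl}(j\omega)$ are uniformly bounded in $\omega\in\RR$. Combined with the invertibility of $E^{\ee}$, the factorisation $P(j\omega)=j\omega E^{\ee}\bigl(I-\tfrac{1}{j\omega}(E^{\ee})^{-1}A_{11}(j\omega)\bigr)$ together with a Neumann series argument gives $\|P(j\omega)^{-1}\|=O(1/\omega)$ as $\omega\to\infty$. Consequently, the term $C_1 P(j\omega)^{-1}B_1$ vanishes at rate $O(1/\omega)$, and $\Sigma(j\omega)=-A_{22}(j\omega)+O(1/\omega)$.

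The main obstacle is establishing a uniform bound $\sup_{\omega\in\RR}\|A_{22}(j\omega)^{-1}\|<\infty$, which is needed to justify inverting $\Sigma(j\omega)$ and replacing $\Sigma(j\omega)^{-1}$ by $-A_{22}(j\omega)^{-1}+O(1/\omega)$ via another Neumann series. This bound is precisely what is delivered by Assumption~\ref{assumption_sstab}: strong exponential stability of (\ref{system}) implies strong stability of the associated delay difference equation in $x_2$ obtained from (\ref{coupled}), which in turn forces $\det A_{22}(\lambda)\neq 0$ in a closed right half-plane $\{\Re\lambda\geq\alpha\}$ for some $\alpha<0$ and a uniform resolvent bound there (see the discussion of strong stability of neutral/difference equations in \cite{have:02,Michiels:2007:MULTIVARIATE}). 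I would invoke this as a technical lemma from Appendix~\ref{sec:Appendix2}, which is the natural place for such a statement.

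With these ingredients in place, expanding the right-hand side of the displayed formula and collecting terms shows that every contribution except $-C_2 A_{22}(j\omega)^{-1}B_2=T_a(j\omega)$ is $O(1/\omega)$ in norm as $\omega\to\infty$. Given $\gamma>0$, one then chooses $\Omega$ large enough that the accumulated $O(1/\omega)$ error has $\sigma_1(\cdot)<\gamma$ for all $\omega>\Omega$, which is the required conclusion.
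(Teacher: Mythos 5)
Your proposal is correct and follows essentially the same route as the paper: the paper's (very terse) proof likewise invokes the explicit inverse of the two-by-two block matrix in (\ref{transferblock}) together with the finiteness of $\sup_{\Re(\lambda)\geq 0}\|A_{22}(\lambda)^{-1}\|_2$, which it attributes to Assumption~\ref{assumption_sstab}. You have simply supplied the details (Schur complement, Neumann series, $O(1/\omega)$ bookkeeping) that the paper leaves implicit.
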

\begin{proof}
The assertion follows from the explicit expression for the inverse of the two-by-two block matrix in (\ref{transferblock}), combined with the property that
 \begin{equation}\label{normA22}
\sup_{\Re(\lambda)\geq 0}\left \| \left( A_{22}(\lambda) \right)^{-1} \right\|_2
 \end{equation}
 is finite. The latter is due to Assumption~\ref{assumption_sstab}.
\end{proof}

The $\Hi$ norm of the transfer function $T$ of the \emph{stable} system (\ref{system}), is defined as
\[
\|T(j\w)\|_\infty:=\sup_{\w\in \R} \sigma_{1} \left( T(j\w) \right).
\]
Similarly we can define $\Hi$ norm of $T_a$.
\smallskip

\section{The strong H-infinity norm of time-delay systems} \label{sec:shinf}

We now analyze continuity properties of the $\Hi$ norm of the transfer function $T$ with respect to the delay parameters. The function
\begin{equation}\label{defTdel}
\vec\tau\in(\RR_{0}^+)^{m}\mapsto \|T(j\w,\vec\tau)\|_\infty
\end{equation}
is, in general, not continuous, which is inherited from the behavior of the asymptotic transfer function, $T_a$, more precisely the function
\begin{equation}\label{defTadel}
\vec\tau\in(\RR_{0}^+)^{m}\mapsto \|T_a(j\w,\vec\tau)\|_\infty.
\end{equation}
We start with a motivating example

\begin{example} \label{ex:TandTa}
Let the transfer function $T$ be defined as
\begin{equation} \label{Tex}
T(\lambda,\vec\tau)=\frac{\lambda+2.1}{(\lambda+0.1)(1-0.25e^{-\lambda\tau_1}+0.5e^{-\lambda\tau_2})+1}
\end{equation} where $(\tau_1,\tau_2)=(1,2)$. The transfer function $T$ is stable, its $\Hi$ norm is $2.5788$, achieved at $\w=1.6555$ and the maximum singular value plot is given in Figure \ref{fig:svd1}. The high frequency behavior is described by the asymptotic transfer function
\begin{equation} \label{Taex}
T_a(\lambda,\vec\tau)=\frac{1}{(1-0.25e^{-\lambda\tau_1}+0.5e^{-\lambda\tau_2})},
\end{equation}
whose $\Hi$ norm is equal to $2.0320$, which is less than $\|T(j\w,\vec\tau)\|_\infty$. However, when the first time delay is perturbed to $\tau_1=0.99$, the $\Hi$ norm of the transfer function $T$ is $3.9993$, reached at $\w=158.6569$, see Figure \ref{fig:svd099}. The $\Hi$ norm of $T$ is quite different from that for $(\tau_1,\tau_2)=(1,2)$. A closer look at the maximum singular value plot of the asymptotic transfer function $T_a$ in Figure \ref{fig:svd099Ta} shows that the sensitivity is due to the transfer function $T_a$.

\begin{figure}[!h]
    \begin{minipage}[t]{0.45\textwidth}
        \vspace{0pt}
        \includegraphics[width=\linewidth]{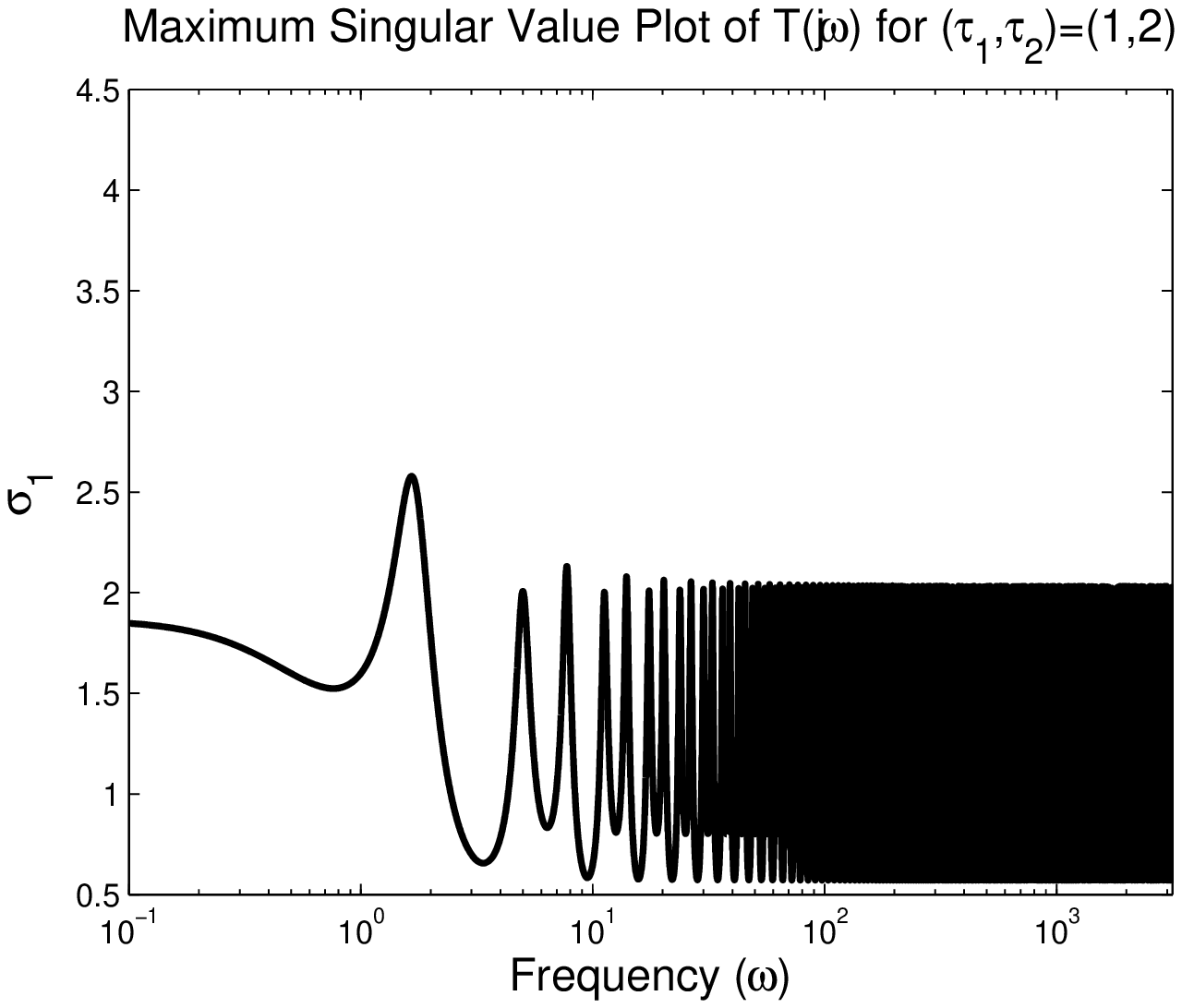}
        \caption{\label{fig:svd1} The maximum singular value plot of $T(j\w,\vec\tau)$ for $(\tau_1,\tau_2)=(1,2)$ as a function of $\omega$.}
    \end{minipage}
\hfill
    \begin{minipage}[t]{0.45\textwidth}
        \vspace{0pt}\raggedright
        \includegraphics[width=\linewidth]{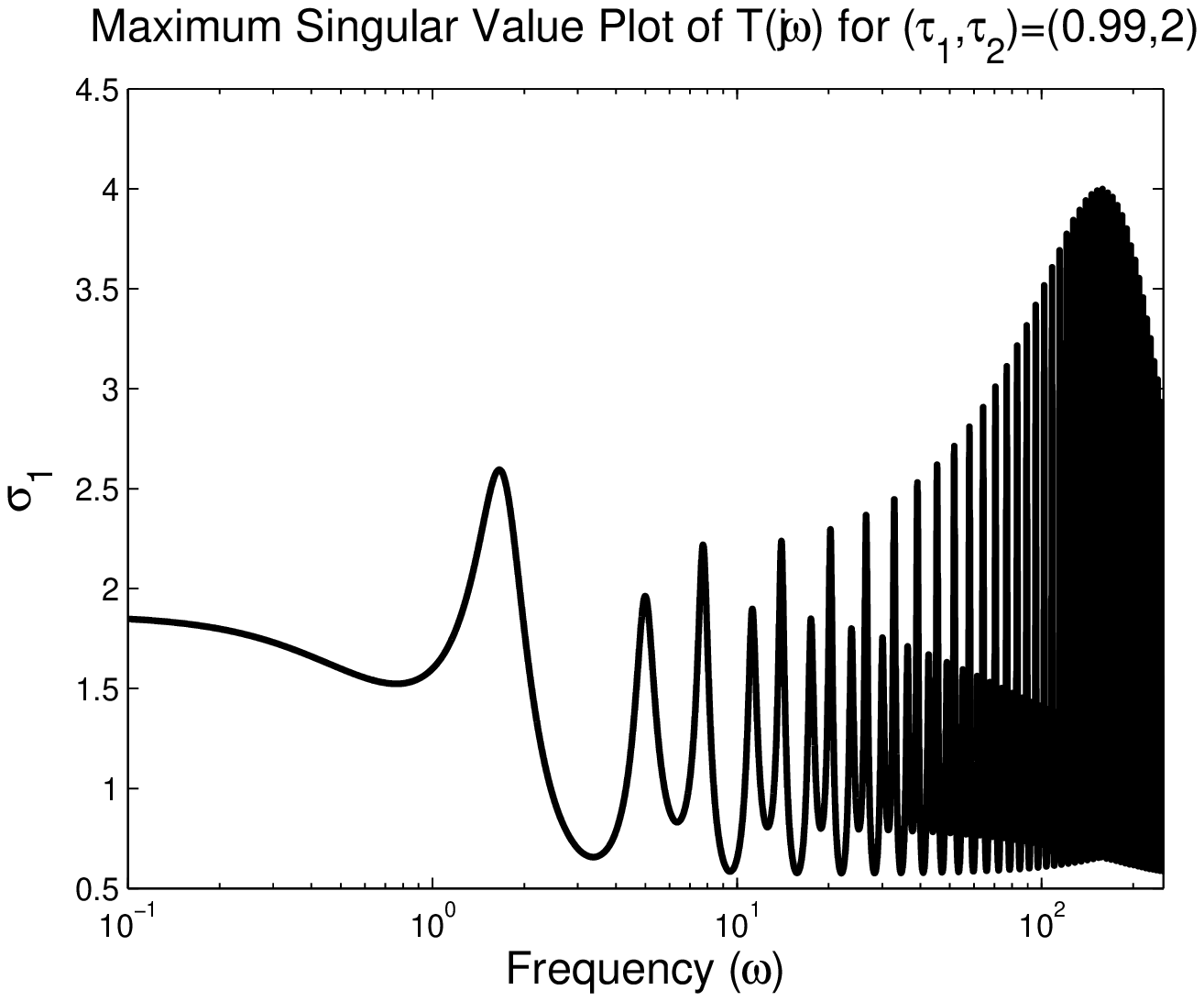}
        \caption{\label{fig:svd099} The maximum singular value plot of $T(j\w,\vec\tau)$ for $(\tau_1,\tau_2)=(0.99,2)$ as a function of $\omega$.}
   \end{minipage}
\end{figure}

Even if the first delay is perturbed slightly to $\tau_1=0.999$, the problem is not resolved, indicating that the functions (\ref{defTdel}) and (\ref{defTadel}) are discontinuous at $(\tau_1,\tau_2)=(1,2)$. The $\Hi$ norm of the transfer function $T$  for $(\tau_1,\tau_2)=(0.999,2)$ is given by $3.9944$, and the peak value is reached at  $\w=1515.8091$. The corresponding asymptotic transfer function $T_a$ is shown in Figure \ref{fig:svd0999Ta}. When the delay perturbation tends to zero, the frequency where the maximum in the singular value plot of the asymptotic transfer function $T_a$ is achieved moves towards infinity.
\end{example}

\begin{figure}[!h]
    \begin{minipage}[t]{0.45\textwidth}
        \vspace{0pt}\raggedright
        \includegraphics[width=\linewidth]{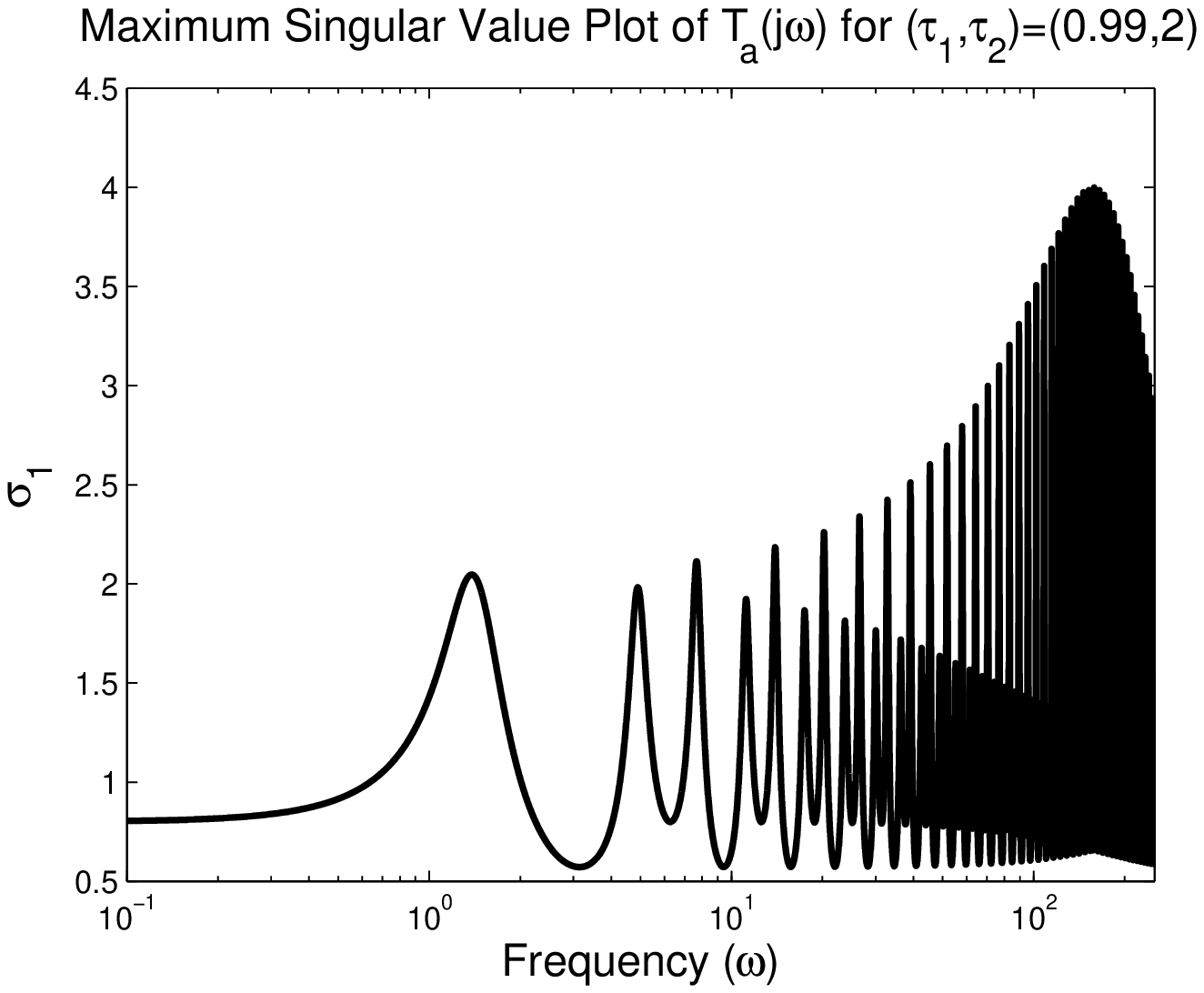}
        \caption{\label{fig:svd099Ta} The maximum singular value plot of $T_a(j\w,\vec\tau)$ for $(\tau_1,\tau_2)=(0.99,2)$ as a function of $\omega$.}
   \end{minipage}
\hfill
    \begin{minipage}[t]{0.45\textwidth}
        \vspace{0pt}
        \includegraphics[width=\linewidth]{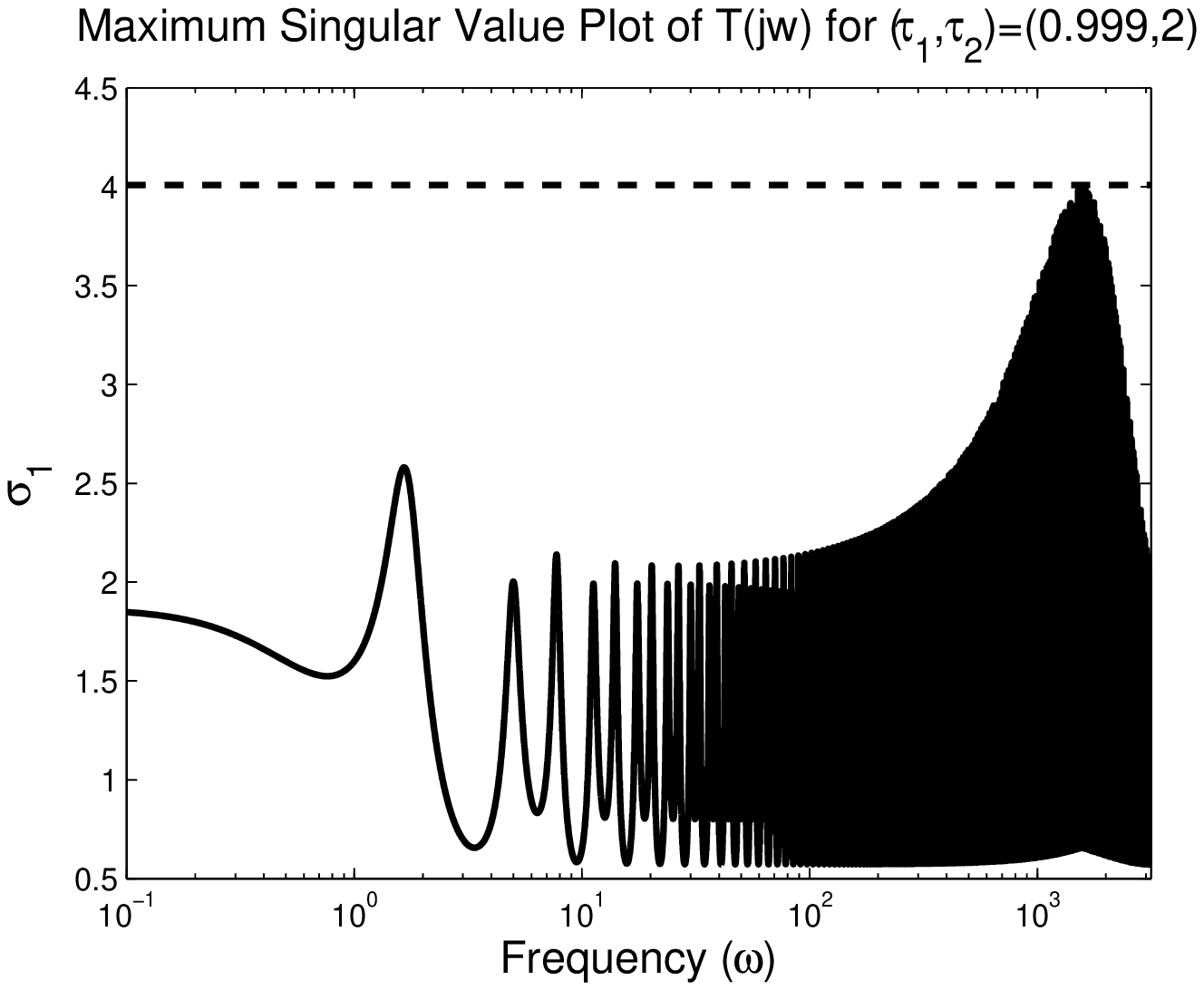}
        \caption{\label{fig:svd0999Ta} The maximum singular value plot of $T_a(j\w,\vec\tau)$ for $(\tau_1,\tau_2)=(0.999,2)$ as a function of $\omega$.}
        \end{minipage}
\end{figure}

The above example illustrates that the $\Hi$ norm of the transfer function $T$ may be sensitive to \emph{infinitesimal} delay changes. On the other hand, for any $\omega_{\max}>0$, the function
\[
\vec\tau\mapsto \max_{[0,\ \omega_{\max}]} \sigma_1(T(j w,\vec\tau)),
\]
where the maximum is taken over a compact set, is continuous, because a discontinuity would be in contradiction with the continuity of the maximum singular value function of a matrix. Hence, the sensitivity of the $\Hi$ norm is related to the behavior of the transfer function at high frequencies and, hence, the asymptotic transfer function $T_a$. Accordingly we start by studying the properties of the function (\ref{defTadel}).

Since small modeling errors and uncertainty are inevitable in a practical design, we wish to characterize the smallest upper bound for the $\Hi$ norm of the asymptotic transfer function $T_a$ which is {\it insensitive} to small delay changes.

\begin{definition} \label{def:shinfTa}
For $\vec \tau\in(\RR_{0}^+)^{m}$, let the strong $\mathcal{H}_{\infty}$ norm of $T_a$, $\interleave {T_a}(j\w,\vec \tau)\interleave_\infty$, be defined as
\[
\interleave T_a(j\w,\vec \tau)\interleave_\infty:=\lim_{\epsilon\rightarrow 0+}
\sup \{\|T_a(j\w,\vec \tau_\epsilon)\|_\infty: \vec \tau_\epsilon\in\mathcal{B}(\vec \tau,\epsilon) \cap (\RR^+)^{m} \} ,
\]
\end {definition}

Several properties of this upper bound on $\|T_a(j\w,\vec \tau)\|_\infty$ are listed below. Recall that  Assumption~\ref{assumption_sstab} is taken.
\begin{proposition}\label{prop:Tasinfprop}
The following assertions hold:
\begin{enumerate}
\item for every $\vec\tau\in(\RR_0^+)^m$, we have
\begin{equation} \label{Tasweep}
\interleave T_a(j\w,\vec \tau)\interleave_\infty=\max_{\vec\theta\in [0,\
2\pi]^m} \sigma_{1} \left( \mathbb{T}_a(\vec \theta) \right),
\end{equation}
where
\begin{equation} \label{Ta_theta}
\mathbb{T}_a(\vec \theta)=-C V \left(U^T A_0 V +\sum_{i=1}^m U^T A_i V e^{-j\theta_i} \right)^{-1} U^T B;
\end{equation}
\item $\interleave T_a(j\w,\vec \tau)\interleave_\infty \geq \|T_a(j\w,\vec \tau)\|_{\infty}$ for all delays $\vec \tau$;
\item $\interleave T_a(j\w,\vec \tau)\interleave_\infty=\|T_a(j\w,\vec \tau)\|_\infty$ for rationally
independent\footnote{The $m$ components of
$\vec\tau=(\tau_1,\ldots,\tau_m)$ are rationally
independent if and only if $\sum_{k=1}^m z_k \tau_k=0,\
z_k\in\ZZ$ implies $z_k=0,\ \forall k=1,\ldots,m$. For
instance, two delays $\tau_1$ and $\tau_2$ are rationally
independent if their ratio is an irrational number.} $\vec \tau$.
\end{enumerate}
\end{proposition}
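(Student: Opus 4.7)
The starting observation is that $T_a(j\omega,\vec\tau)$ depends on its arguments only through the exponentials $e^{-j\omega\tau_i}$, so that
\[
T_a(j\omega,\vec\tau) \;=\; \mathbb{T}_a(\omega\tau_1,\ldots,\omega\tau_m),
\]
with the arguments taken modulo $2\pi$ componentwise. Under Assumption~\ref{assumption_sstab}, a standard characterization from the strong-stability theory of delay difference equations (see the references cited in the introduction) guarantees that the inner matrix $U^T A_0 V + \sum_{i=1}^m U^T A_i V e^{-j\theta_i}$ is nonsingular on the entire compact torus $[0,2\pi]^m$, and remains so under sufficiently small delay perturbations. Consequently $\mathbb{T}_a(\vec\theta)$ is well-defined for $\vec\tau$ and any nearby tuple, and $\sigma_1(\mathbb{T}_a(\cdot))$ is continuous on the torus and attains a maximum.

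This identification yields the pointwise upper bound
\[
\|T_a(j\omega,\vec\tau)\|_\infty \;=\; \sup_{\omega\in\R}\sigma_1\bigl(\mathbb{T}_a(\omega\vec\tau)\bigr) \;\le\; \max_{\vec\theta\in[0,2\pi]^m}\sigma_1\bigl(\mathbb{T}_a(\vec\theta)\bigr),
\]
valid for every admissible $\vec\tau$. Applied to all $\vec\tau_\epsilon\in\mathcal{B}(\vec\tau,\epsilon)\cap(\R^+)^m$ this delivers the ``$\le$'' half of assertion~1. Assertion~2 is immediate from the definition of $\interleave\cdot\interleave_\infty$, since $\vec\tau\in\mathcal{B}(\vec\tau,\epsilon)\cap(\R^+)^m$ for every $\epsilon>0$.

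The crucial tool for the remaining assertions is Kronecker's theorem on linear flows on a torus: whenever $\tau_1,\ldots,\tau_m$ are linearly independent over $\mathbb{Q}$, the orbit $\{\omega\vec\tau\bmod 2\pi:\omega\in\R\}$ is dense in $[0,2\pi]^m$. Combined with continuity of $\sigma_1(\mathbb{T}_a(\cdot))$, this sharpens the pointwise bound into the equality $\|T_a(j\omega,\vec\tau)\|_\infty=\max_{\vec\theta}\sigma_1(\mathbb{T}_a(\vec\theta))$, which together with assertion~1 yields assertion~3. For the ``$\ge$'' half of assertion~1, I would exploit density of rationally independent tuples in $(\R^+)^m$: in every $\mathcal{B}(\vec\tau,\epsilon)\cap(\R^+)^m$ one can pick $\vec\tau_\epsilon$ with rationally independent components, so that its $\Hi$ norm equals $\max_{\vec\theta}\sigma_1(\mathbb{T}_a(\vec\theta))$ (a constant independent of $\epsilon$), showing the supremum in the definition is at least this value for every $\epsilon$, and the limit inherits the same bound.

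The principal obstacle I expect is the well-definedness claim in the first paragraph: extending invertibility of $U^T A_0 V+\sum_i U^T A_i V e^{-j\theta_i}$ from the single orbit $\{\omega\vec\tau\bmod 2\pi\}$ to the whole torus is precisely the analytic content that distinguishes \emph{strong} exponential stability from plain exponential stability, and I would invoke it as a black box from the strong-stability literature rather than re-derive it here.
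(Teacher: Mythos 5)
Your proposal is correct and follows essentially the same route as the paper: the upper bound comes from the inclusion of the frequency orbit $(e^{-j\omega\tau_1},\ldots,e^{-j\omega\tau_m})$ in the torus, and the lower bound from choosing rationally independent delays in every ball $\mathcal{B}(\vec\tau,\epsilon)$ and invoking density of the orbit (the paper cites Theorem~2.1 of its reference where you invoke Kronecker's theorem directly, but these are the same fact). Your explicit remark that strong exponential stability guarantees invertibility of $U^TA_0V+\sum_i U^TA_iVe^{-j\theta_i}$ on the whole torus is a point the paper leaves implicit, and it is correctly attributed.
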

\begin{proof}
We always have
\begin{multline}\nonumber
(e^{-j\w\tau_1},\ldots,e^{-j\w\tau_m})\in \{(e^{-j\theta_1},\ldots,e^{-j\theta_m}): \theta_i\in[0,2\pi], \ i=1,\ldots,m  \},
\end{multline}
implying
\begin{equation}\label{arg1}
\|T(j\omega,\vec\tau)\|_{\infty}\leq \max_{\vec\theta\in [0,\
2\pi]^m} \sigma_{1} \left( \mathbb{T}_a(\vec \theta) \right).
\end{equation}
For any $\epsilon>0$ in Definition \ref{def:shinfTa}, there exists $\vec \tau_\epsilon=[ \tau_{\epsilon,1},\ldots,\tau_{\epsilon,m}]$ rationally independent in $\mathcal{B}(\vec \tau,\epsilon)\cap(\RR^+)^m$. By Theorem $2.1$ in \cite{TW-report-286}, given rationally independent time delays $\vec \tau_\epsilon$ and for $\vec \theta=[\theta_1,\ldots,\theta_m]$ arbitrary, there exists a sequence of real numbers $\{\w_n\}_{n\geq1}$ such that
\[
\lim_{n\rightarrow\infty} \max_{1\leq i \leq m} \left|e^{-j\w_n \tau_{\epsilon,i}}-e^{-j\theta_i}\right|=0.
\]
It follows that
\begin{multline}
\nonumber \mathrm{closure}\{(e^{-j\w\tau_{\epsilon,1}},\ldots,e^{-j\w\tau_{\epsilon,m}}): \w\in\R  \}= \\
\{(e^{-j\theta_1},\ldots,e^{-j\theta_m}): \theta_i\in[0,2\pi],  i=1,\ldots,m  \},
\end{multline}
implying
\begin{equation}\label{arg2}
\|T(j\omega,\vec\tau_{\epsilon})\|_{\infty}= \max_{\vec\theta\in [0,\
2\pi]^m} \sigma_{1} \left( \mathbb{T}_a(\vec \theta) \right).
\end{equation}
The assertions follow from (\ref{arg1}) and (\ref{arg2}).
%
%
\end{proof}

Formula (\ref{Tasweep}) in Proposition \ref{prop:Tasinfprop} shows that the strong $\Hi$ norm of $T_a$ is independent of the delay values. The formula further leads to a computational scheme based on sweeping on $\vec \theta$ intervals. This approximation can be corrected by solving a set of nonlinear equations. Numerical computation details are presented in Section~\ref{sec:hinfnorm_Ta}.

\medskip

We now come back to the properties of the transfer function (\ref{defTdel}) of the system (\ref{system}). As we have illustrated with Example~\ref{ex:TandTa}, a discontinuity of the function (\ref{defTadel}) may carry over to the function (\ref{defTdel}). Therefore, we define the strong $\Hi$ norm of the transfer function~$T$ in a similar way.
\begin{definition}
 For $\vec \tau\in(\RR_{0}^+)^{m}$,  the strong $\Hi$ norm of $T$, $\interleave {T}(j\w,\vec \tau)\interleave_\infty $, is given by
\[
   \interleave T(j\w,\vec \tau)\interleave_\infty:=\lim_{\epsilon\rightarrow 0+}
\sup \{\|T(j\w,\vec \tau_\epsilon)\|_\infty: \vec \tau_\epsilon\in\mathcal{B}(\vec \tau,\epsilon) \cap (\RR^+)^{m} \}.
\]
\end{definition}

The following main theorem describes the desirable property that, in contrast to the $\Hi$ norm, the strong H-infinity norm \emph{continuously} depends on the delay parameters.
It also presents an explicit expression that lays at the basis of the algorithm to compute the strong $\Hi$ norm of a transfer function, presented in the next section. The proof makes use of the technical results in Section~\ref{sec:Appendix2} of the Appendix.
\begin{theorem}
The strong $\Hi$ norm of the transfer function of the DDAE (\ref{system}) satisfies
\begin{equation} \label{shinfnorm}
    \interleave T(j\w,\vec \tau)\interleave_\infty=\max\left(\|T(j\w,\vec\tau)\|_{\infty}, \interleave T_a(j\w,\vec \tau)\interleave_\infty \right),
\end{equation} where $T$ and $T_a$ are the transfer function (\ref{T}) and the asymptotic transfer function (\ref{Ta}).
%
In addition, the function
\begin{equation}\label{shinfnorm2}
    \vec \tau\in(\RR^+_0)^m\mapsto \interleave T(j\w,\vec \tau)\interleave_\infty
\end{equation}
is continuous.
\end{theorem}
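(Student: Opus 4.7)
The plan is to establish (\ref{shinfnorm}) by proving two opposite inequalities, and then deduce the continuity from (\ref{shinfnorm}) together with the fact that, by part~1 of Proposition~\ref{prop:Tasinfprop}, $\interleave T_a(j\w,\vec\tau)\interleave_\infty$ does not depend on $\vec\tau$.

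For the lower bound, I would show separately that $\interleave T\interleave_\infty \ge \|T\|_\infty$ and $\interleave T\interleave_\infty \ge \interleave T_a\interleave_\infty$. The first is immediate: taking $\vec\tau_\epsilon=\vec\tau$ in Definition~4.5 already yields $\|T(j\w,\vec\tau)\|_\infty$ in the $\sup$, for every $\epsilon>0$. For the second, in any ball $\mathcal{B}(\vec\tau,\epsilon)$ I pick a rationally independent $\vec\tau_\epsilon$; by part~3 of Proposition~\ref{prop:Tasinfprop} we have $\|T_a(j\w,\vec\tau_\epsilon)\|_\infty=\interleave T_a\interleave_\infty$, and by Kronecker's theorem (as used in the proof of Proposition~\ref{prop:Tasinfprop}) this value is actually a $\limsup$ as $\w\to\infty$ of $\sigma_1(T_a(j\w,\vec\tau_\epsilon))$. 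Combined with Proposition~\ref{propconverge}, this gives $\limsup_{\w\to\infty}\sigma_1(T(j\w,\vec\tau_\epsilon))\ge\interleave T_a\interleave_\infty$, hence $\|T(j\w,\vec\tau_\epsilon)\|_\infty\ge\interleave T_a\interleave_\infty$, and the claim follows.

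The upper bound $\interleave T\interleave_\infty \le \max(\|T\|_\infty,\interleave T_a\interleave_\infty)$ is where the main work lies. I would split the frequency axis at a threshold $\Omega$ and estimate the two contributions to $\|T(j\w,\vec\tau_\epsilon)\|_\infty$ separately. On the compact set $[0,\Omega]$, the function $(\w,\vec\theta)\mapsto\sigma_1(T(j\w,\vec\theta))$ is jointly continuous, so $\sup_{\w\in[0,\Omega]}\sigma_1(T(j\w,\vec\tau_\epsilon))\to\sup_{\w\in[0,\Omega]}\sigma_1(T(j\w,\vec\tau))\le\|T(j\w,\vec\tau)\|_\infty$ as $\epsilon\to 0$. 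The hard part is the high-frequency tail: I need Proposition~\ref{propconverge} to hold \emph{uniformly} for $\vec\tau_\epsilon$ in a neighborhood of $\vec\tau$, so that for any $\gamma>0$ one can choose $\Omega$ independently of $\vec\tau_\epsilon$ with $\sup_{\w>\Omega}\sigma_1(T(j\w,\vec\tau_\epsilon))\le\sup_{\w>\Omega}\sigma_1(T_a(j\w,\vec\tau_\epsilon))+\gamma\le\interleave T_a\interleave_\infty+\gamma$, where the last step uses (\ref{Tasweep}). This uniform bound relies on a uniform upper bound for the quantity in (\ref{normA22}) over delays in a neighborhood of $\vec\tau$; such a uniform bound follows from Assumption~\ref{assumption_sstab} (strong exponential stability is by definition robust to small delay perturbations) and is exactly the type of statement covered by the technical lemmas promised in Appendix~\ref{sec:Appendix2}, which I would invoke here. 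Combining the two estimates and letting first $\epsilon\to 0$ and then $\gamma\to 0$ gives the upper bound.

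For the continuity of (\ref{shinfnorm2}), having established (\ref{shinfnorm}) and knowing that $\interleave T_a\interleave_\infty$ is a constant $c$, it suffices to show that $\vec\tau\mapsto\max(\|T(j\w,\vec\tau)\|_\infty,c)$ is both upper and lower semicontinuous. For upper semicontinuity, if $\vec\tau_n\to\vec\tau$, then $\vec\tau_n\in\mathcal{B}(\vec\tau,\epsilon)$ eventually, so $\|T(j\w,\vec\tau_n)\|_\infty\le\sup\{\|T(j\w,\vec\tau')\|_\infty:\vec\tau'\in\mathcal{B}(\vec\tau,\epsilon)\}$; letting $n\to\infty$ and then $\epsilon\to 0$ produces $\limsup\|T(j\w,\vec\tau_n)\|_\infty\le\interleave T(j\w,\vec\tau)\interleave_\infty$, hence $\limsup \max(\|T(j\w,\vec\tau_n)\|_\infty,c)\le\interleave T(j\w,\vec\tau)\interleave_\infty$. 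For lower semicontinuity, the bound $\max(\|T(j\w,\vec\tau_n)\|_\infty,c)\ge c$ handles the constant piece, while on any compact frequency interval $[0,\Omega]$ the joint continuity already used above gives $\liminf\|T(j\w,\vec\tau_n)\|_\infty\ge\sup_{[0,\Omega]}\sigma_1(T(j\w,\vec\tau))$; letting $\Omega\to\infty$ yields $\liminf\|T(j\w,\vec\tau_n)\|_\infty\ge\|T(j\w,\vec\tau)\|_\infty$, and the two lower bounds combine to give $\liminf\ge\max(\|T(j\w,\vec\tau)\|_\infty,c)=\interleave T(j\w,\vec\tau)\interleave_\infty$, completing the proof.
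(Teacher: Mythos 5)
Your proposal is correct, but it follows a genuinely different route from the paper's. The paper proves the theorem by a case split on whether $\|T(j\w,\vec\tau)\|_\infty>\interleave T_a(j\w,\vec \tau)\interleave_\infty$: in that case Lemma~\ref{lem:fininter} (finitely many level crossings, all at frequencies below a bound that holds uniformly over a delay ball) yields continuity of the ordinary $\Hi$ norm and hence the formula; in the complementary case everything is delegated to Lemma~\ref{lem3ap}, whose lower bound uses \emph{commensurate} delay approximants and the resulting periodicity of $T_a$ to push a near-maximizing frequency to infinity, and whose upper bound uses analyticity of the determinant (\ref{ham}) to rule out singular-value crossings of any level above $\interleave T_a\interleave_\infty$. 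You instead prove the two inequalities of (\ref{shinfnorm}) directly: your lower bound replaces commensurate delays and periodicity by rationally independent delays and Kronecker's theorem (equivalent in effect, and arguably closer to Proposition~\ref{prop:Tasinfprop}); your upper bound replaces the level-set/analyticity argument by a frequency-splitting estimate, which is more elementary but hinges on the uniform-in-delay version of Proposition~\ref{propconverge} --- that is exactly Lemma~\ref{propconverge2} of the Appendix, so the ingredient you ``would invoke'' does exist and rests, as you say, on strong stability making (\ref{normA22}) continuous in the delays. Your derivation of continuity as a corollary of (\ref{shinfnorm}) via upper and lower semicontinuity is cleaner than the paper's, which obtains continuity only through the same two-case analysis. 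What the paper's route buys is reuse: Lemma~\ref{lem:fininter} is needed anyway to justify the level-set algorithm of Section~\ref{sec:comp_shinf}, so the theorem comes almost for free once that lemma is in place; what your route buys is a self-contained sandwich argument that avoids counting zeros of analytic functions. The only points you should make explicit are that joint continuity of $(\w,\vec\theta)\mapsto\sigma_1(T(j\w,\vec\theta))$ on $[0,\Omega]\times\overline{\mathcal{B}(\vec\tau,\epsilon)}$ requires absence of imaginary-axis characteristic roots for all nearby delays (again supplied by Assumption~\ref{assumption_sstab}), and that the bound $\sup_{\w}\sigma_1(T_a(j\w,\vec r))\le\interleave T_a\interleave_\infty$ for arbitrary $\vec r$ is inequality (\ref{arg1}) from the proof of Proposition~\ref{prop:Tasinfprop}.
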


\begin{proof}
Lemma~\ref{lem:fininter} implies that the function (\ref{defTdel}) is continuous at delay values where
\begin{equation}\label{condcase1}
\|T(j\omega,\vec\tau)\|_\infty>\interleave T_a(j\omega,\vec\tau)\interleave_\infty.
\end{equation}
This property, along with the fact that $\interleave T_a(j\omega,\vec\tau)\interleave_\infty$ is independent of $\vec\tau$ (see Proposition~\ref{prop:Tasinfprop}), lead to the assertion (\ref{shinfnorm}) and the continuity of (\ref{shinfnorm2}) under the condition (\ref{condcase1}). In the other case the assertions follow from Lemma~\ref{lem3ap}.
\end{proof}

\smallskip

\begin{example}
We come back to Example~\ref{ex:TandTa}. The $\Hi$ norm of $T$, as defined by (\ref{Tex}), is $2.6422$ and the strong $\Hi$ norm of the corresponding asymptotic transfer function $T_a$ is $4$. From property (\ref{shinfnorm}), we conclude that the strong $\Hi$ norm of $T$ (\ref{Tex}) is $4$.
\end{example}

\begin{remark} In contrast to delay perturbations, the $\Hi$ norm of $T$ is continuous with respect to changes of the system matrices $A_i,\ldots ,A_m$, $B$ and $C$.
\end{remark}

\section{Computation of strong H-infinity norms} \label{sec:comp_shinf}


The algorithm for computing the strong $\Hi$ norm of the transfer function of (\ref{system}) is based on property (\ref{shinfnorm}).  Therefore, we first outline in \S\ref{sec:hinfnorm_Ta} the strong $\Hi$ norm computation of the asymptotic transfer function $T_a$, before presenting the algorithm in~\S\ref{sec:shinfnorm_T}.

\subsection{Strong H-infinity norm of the asymptotic transfer function} \label{sec:hinfnorm_Ta}

The computation of $\interleave T_a(j\w,\vec \tau)\interleave_\infty$ is based on expression (\ref{Tasweep}) in Proposition \ref{prop:Tasinfprop}.  We obtain an approximation by restricting $\vec \theta$ in (\ref{Tasweep}) to a grid,
\begin{equation} \label{Taapprox}
\interleave T_a(j\w,\vec \tau)\interleave_\infty\approx\max_{\vec\theta\in\Theta_h} \sigma_{1} \left(\mathbb{T}_a(\vec \theta)\right),
\end{equation} where $\Theta_h$ is a m-dimensional grid over the hypercube $[0,\ 2\pi]^m$ and $\mathbb{T}_a(\vec \theta)$ is defined by (\ref{Ta_theta}).
If a high accuracy  is required, then the approximate results may be corrected by solving the nonlinear equations
\begin{equation} \label{eq:Tacorrection}
\left\{\begin{array}{l}
\left[\begin{array}{cc}
\mathbb{A}_{22}(\vec \theta) & -\xi^{-1}B_2B_2^T\\
\xi^{-1}C_2^TC_2 & -(\mathbb{A}_{22}(\vec \theta))^{*}
\end{array}\right]
\left[\begin{array}{c}
u_a \\
v_a
\end{array}\right]
=0, \\
n(u_a,v_a)=0, \\
\Re ( e^{-j\theta_i}(v_a^* A_i^{(22)} u_a) )=0,\ i=1,\ldots,m,
\end{array}\right.
\end{equation}
where
\begin{equation}\label{defa22}
 \mathbb{A}_{22}(\vec \theta)=-U^T A_0 V -\sum_{i=1}^m U^T A_i V e^{-j\theta_i}
 \end{equation}
and $n(u_a,v_a)=0$ is a normalization constraint. The first equation in (\ref{eq:Tacorrection}) implies that  $\xi$ is a singular value of $\mathbb{T}_a(\vec\theta)$. The last equation of (\ref{eq:Tacorrection}) expresses that the derivatives of the singular value $\xi$ with respect to the elements of $\vec \theta$ are zero.
%
In our implementation we solve (\ref{eq:Tacorrection}) using the Gauss-Newton method, which exhibits quadratic convergence because the (overdetermined) equations have an exact solution, see Section 10.2 of \cite{gauss-newton}.

In most practical  problems, the number of delays to be considered in $\mathbb{T}_a(\vec\theta)$ and  $\mathbb{A}_{22}(\vec \theta)$ is much smaller than the number of system delays, $m$, because most of the terms in (\ref{defa22}) are zero. This significantly reduces the computational cost of the sweeping in (\ref{Taapprox}). Note that in a control application a nonzero term in (\ref{defa22}) corresponds to a high frequency feedthrough over the control loop. 
We illustrate this with the following example.

\begin{example}
Consider the time-delay system
\begin{equation}\label{veeltau}
\left\{\begin{array}{lll}
\dot x(t)&=&\sum_{i=2}^{m} M_i x(t-\tau_i)+ B_1 w(t),\\
 z(t)&=& Px(t)+w(t)+N_1w(t-\tau_1).
\end{array}\right.
\end{equation}
When defining $X=[x^T\ \gamma_d^T \gamma_w^T]^T$, where $\gamma_d$ and $\gamma_w$ are slack variables, the system can be described by equations of the form  (\ref{system}) as
{\small\[
\left\{\begin{array}{l}
\underbrace{\left[\begin{array}{ccc}
I & 0 & 0 \\
0 &0 & 0 \\
0& 0&0
\end{array}\right]}_{E}\dot X(t)=
\underbrace{\left[\begin{array}{ccc}
0 & 0 & 0 \\
0 & -I & I \\
0& 0& -I
\end{array}\right]}_{A_0}X(t)+
\underbrace{\left[\begin{array}{ccc}
0 & 0 & 0 \\
0 & 0 & N_1\\
0& 0 & 0
\end{array}\right]}_{A_{1}} X(t-\tau_1)
\\ \hspace{5.8cm} +
\sum_{i=2}^m
\underbrace{\left[\begin{array}{ccc}
M_i & 0 & 0 \\
0 & 0 &0\\
0& 0 & 0
\end{array}\right]}_{A_i} X(t-\tau_i)+
\underbrace{\left[\begin{array}{c}
B_1\\0\\I
\end{array}\right]}_{B} w(t),
\\
z(t)=\underbrace{\left[\begin{array}{ccc}
P & I & 0
\end{array}\right]}_{C}X(t).
 \end{array}\right.
 \]}
The asymptotic transfer function (\ref{Ta}) is given by
\[
T_a(\lambda)=- CV \left(U^TA_0V+\sum_{i=1}^m U^TA_iV e^{-\lambda\tau_i}\right)^{-1} U^TB,
\] where $U=V=\left(
                     \begin{array}{cc}
                       0 & 0 \\
                       I & 0 \\
                       0 & I \\
                     \end{array}
                   \right)$. Since $U^T A_i V=0$ for $i=2,\ldots,m$, $T_a(\lambda)$ reduces to
\begin{eqnarray}
\nonumber T_a(\lambda)&=&- CV \left(U^TA_0V+U^TA_1V e^{-\lambda\tau_1}\right)^{-1} U^TB, \\
\nonumber &=&-\left[
                \begin{array}{cc}
                  I & 0 \\
                \end{array}
              \right]
              \left[
                \begin{array}{cc}
                  -I & -(I+N_1e^{-\lambda\tau_1}) \\
                  0 & -I \\
                \end{array}
              \right]^{-1}
              \left[
                \begin{array}{cc}
                  0 \\
                  I \\
                \end{array}
              \right]\\
\nonumber &=&I+N_1e^{-\lambda \tau_1},
\end{eqnarray}
which readily follows from (\ref{veeltau}).
Although the original system has $m$ delays, the asymptotic transfer function has only one delay $\tau_1$. Accordingly, the grid $\Theta_h$  in the approximation (\ref{Taapprox}) reduces to a grid on the interval $[0,\ 2\pi]$.
\end{example}

\smallskip

In the numerical implementation, we compute the matrix norm of $U^T A_i V$ for $i=1,\ldots,m$ and omit the corresponding time-delays if their norms are less than a tolerance value.

\subsection{Algorithm} \label{sec:shinfnorm_T}
%

%
From (\ref{shinfnorm}) the following implication can be derived.
\[
\interleave T(j\w,\vec \tau)\interleave_\infty> \interleave T_a(j\w,\vec \tau)\interleave_\infty \Rightarrow \interleave T(j\w,\vec \tau)\interleave_\infty=\|T(j\w,\vec \tau)\|_\infty.
\]
Moreover, we learn from Lemma \ref{lem:fininter} that, given a level
\begin{equation}\label{over}
\xi>\interleave T_a(j\w,\vec \tau)\interleave_\infty,
\end{equation}
there are only \emph{finitely} many frequencies $\omega\geq 0$ for which a singular value of $T(j\omega,\vec\tau)$ is equal to $\xi$. These properties allow an adaptation of the standard level set algorithm for $\Hi$ computations for finite-dimensional systems as described in~\cite{steinbuch}. The differences are two-fold. First one has to restrict to the situation where (\ref{over}) holds. This is possible by a preliminary computation of the strong $\Hi$ norm of $T_a$, as outlined in \S\ref{sec:hinfnorm_Ta}, and setting the initial level such that (\ref{over}) is satisfied. Second, the Hamiltonian eigenvalue problem, from which intersections of singular value curves with level sets are computed, is infinite-dimensional, carrying over from the case of retarded time-delay systems discussed in \cite{wimsimax}. Therefore, a discretization is necessary, which brings us to a predictor-corrector approach.
In the predictor step, an approximation of the strong $\Hi$ norm of $T$ (provided it exceeds $\interleave T_a(j\w,\vec \tau)\interleave_\infty$) is obtained by computing
\[
\|T_N(j\omega)\|_{\infty}
\]
using the level set method presented in \cite{steinbuch}.
Here,
\begin{equation} \label{finite2}
T_N(\lambda):={\bf C_N}(\lambda {\bf E_N}-{\bf A_N})^{-1}{\bf B_N}.
\end{equation}
is the transfer function of the system
\begin{equation}
\left\{\begin{array}{lll}
\mathbf{E}_N\dot{z}(t)&=&\mathbf{A}_Nz(t)+\mathbf{B}_Nu(t),\\
\nonumber y(t)&=&\mathbf{C}_Nz(t),
\end{array} \right.
\end{equation}
obtained by a spectral discretization of (\ref{system}) on a grid of $N$ points, see Section~\ref{sec:findimapp} of the appendix for the derivation.
The correction step serves to remove the discretization error on the result. It is based on solving a system of nonlinear equations that characterize extrema in the singular value curves. The initial conditions are generated in  the prediction step, assuring that the algorithm converges to the right peak value.
The overall algorithm for the strong $\Hi$ norm computation is as follows.

\begin{algorithm} \label{alg:hinfnorm}

 {\em
 \noindent
  Input: system data, $N$, grid $\Omega_N$ defined by (\ref{defmesh}), candidate critical frequency $\{\omega_1,\ldots,\omega_l\}$ if available, tolerance tol for the prediction step, $\interleave T_a(j\w,\vec \tau)\interleave_\infty$
 }

\begin{enumerate}
\item \underline{\emph{Prediction step:}}
    \begin{enumerate}
    \item calculate the first level,
    \[
    \xi_l=\max\left(\interleave T_a(j\w,\vec \tau)\interleave_\infty, \sigma_{1}\left(T(j\w_1)\right),\ldots,  \sigma_{1}\left(T(j\w_l)\right)\right)
    \]
    \item repeat until break
        \begin{enumerate}
        \item set $\xi:= \xi_l(1 + 2 \mathrm{tol})$
        \item compute all $\w^{(i)}\in\R$ satisfying $\sigma_k \left(T_N(j\w^{(i)})\right)=\xi$. By \cite[Proposition 12]{genin}, this can be done by computing generalized eigenvalues of the pencil
            \begin{equation}\label{pencil}
                \lambda
                \left[
                    \begin{array}{cc}
                        {\bf E_N}& 0\\
                        0 & {\bf E_N}^T
                    \end{array}
                \right]-
                \left[
                    \begin{array}{cc}
                        {\bf A_N}& \xi^{-1}{\bf B_N B_N}^T\\
                        -\xi^{-1}{\bf C_N}^T{\bf C_N} & -{\bf A_N}^T
                    \end{array}
                \right],
            \end{equation}
            whose imaginary axis eigenvalues are given by $\lambda=j\w^{(i)}$.
        \item \textbf{\em if} no generalized eigenvalues $j\w^{(i)}$ of (\ref{pencil}) exist,
            \textbf{\em then}
              \begin{enumerate}
                  \item[] \textbf{\em if} $\xi_l=\interleave T_a(j\w,\vec \tau)\interleave_\infty$, \textbf{\em then} \\
                      \hspace*{1cm}set
                       $
                       \interleave T(j\w,\vec\tau)\interleave_\infty=
                       \interleave T_a(j\w,\vec\tau)\interleave_\infty
                       $\\
                       \hspace*{1cm}quit
                  \item[] \textbf{\em else} \\
                  \hspace*{1cm} let $\w^{(i)}\in\R$ satisfying $\sigma_k \left(T_N(j\w^{(i)})\right)=\xi_l$,\\
                  \hspace*{1cm} set $\tilde \xi=(\xi+\xi_l)/2$,  $\tilde{\w}^{(i)}=\w^{(i)},\ i=1,2,\ldots$ \\
                  \hspace*{1cm} break, go to the correction step 2.\\
                  \textbf{\em endif}
              \end{enumerate}
              \textbf{\em else}
              \begin{enumerate}
                  \item[] calculate $\mu^{(i)}:=\sqrt{\w^{(i)}\w^{(i+1)}}$, $i=1,2,\ldots$
                  \item[] set
                  \[
                  \xi_l:=\max_i \max \left(\sigma_{1}\left( T_N(j\mu^{(i)})\right),\interleave T_a(j\w,\vec \tau)\interleave_\infty \right).
                  \]
              \end{enumerate}
              \textbf{\em endif}
        \end{enumerate}
    \end{enumerate}
\item \underline{\emph{Correction step:}} \\
%
\begin{enumerate}
\item Solve the nonlinear equations
\begin{equation} \label{eq:Tcorrection}
\left\{\begin{array}{l}H(j\omega,\xi)
\left[\begin{array}{c}
u \\
v
\end{array}\right]
=0, \\
n(u,v)=0, \\
\Im \{v^* (E+\sum_{i=1}^m A_i\tau_ie^{-j\w\tau_i})u \}=0,
\end{array}\right.
\end{equation}
using the Gauss-Newton method, where
\begin{multline}
\nonumber H(j\omega,\xi)=\\
\left[\begin{array}{cc}
j\omega E-A_0-\sum_{i=1}^m A_i e^{-j\omega\tau_i} & -\xi^{-1}BB^T\\
\xi^{-1}C^TC & j\omega E^T+A_0^T+\sum_{i=1}^m A_i^T e^{j\omega\tau_i}
\end{array}\right]
\end{multline}
and $n(u,v)=0$ is a normalizing condition, with the starting values
\[
\omega=\tilde\omega^{(i)},\ \ \xi=\tilde\xi,\
\left[\begin{array}{c}u\\
v\end{array}\right]=
\arg\min{\|H(j\tilde\omega^{(i)},\tilde\xi)\zeta\|}/{\|\zeta\|};
 \]
denote the solutions with $(\hat u^{(i)},\hat
 v^{(i)},\hat\omega^{(i)},\hat\xi^{(i)})$, for $i=1,2,...,$
\item set
$\interleave T(j\omega)\interleave_\infty:=\max_{1\leq i\leq
p}\hat \xi^{(i)}$
\end{enumerate}
%
\end{enumerate}
\end{algorithm}

\medskip

\smallskip

The first and the second equation in (\ref{eq:Tcorrection}) describe the presence of a singular value $\xi$ of matrix $T(j\omega,\vec\tau)$. The third equation expresses that the derivative of this singular value with respect to $\omega$ is equal to zero, see~\cite{wimsimax}. Hence, Equations (\ref{eq:Tcorrection}) can be used to correct approximate peak values. Note that the correction step is only performed~if
\[
\interleave T(j\omega,\vec\tau)\interleave_{\infty}>\interleave T_a(j\omega,\vec\tau)\interleave_{\infty}.
\]
 For details on the choice of the number of discretization points, $N$, and the tolerance, tol, we refer to~\cite{wimsimax}.
%
%
%

\medskip

The main ideas behind  Algorithm~\ref{alg:hinfnorm} are clarified with two  examples.
\begin{example} \label{ex:hinfalg}
\begin{figure}[!h]
    \begin{minipage}[t]{0.45\textwidth}
        \vspace{0pt}\raggedright
        \includegraphics[width=\linewidth]{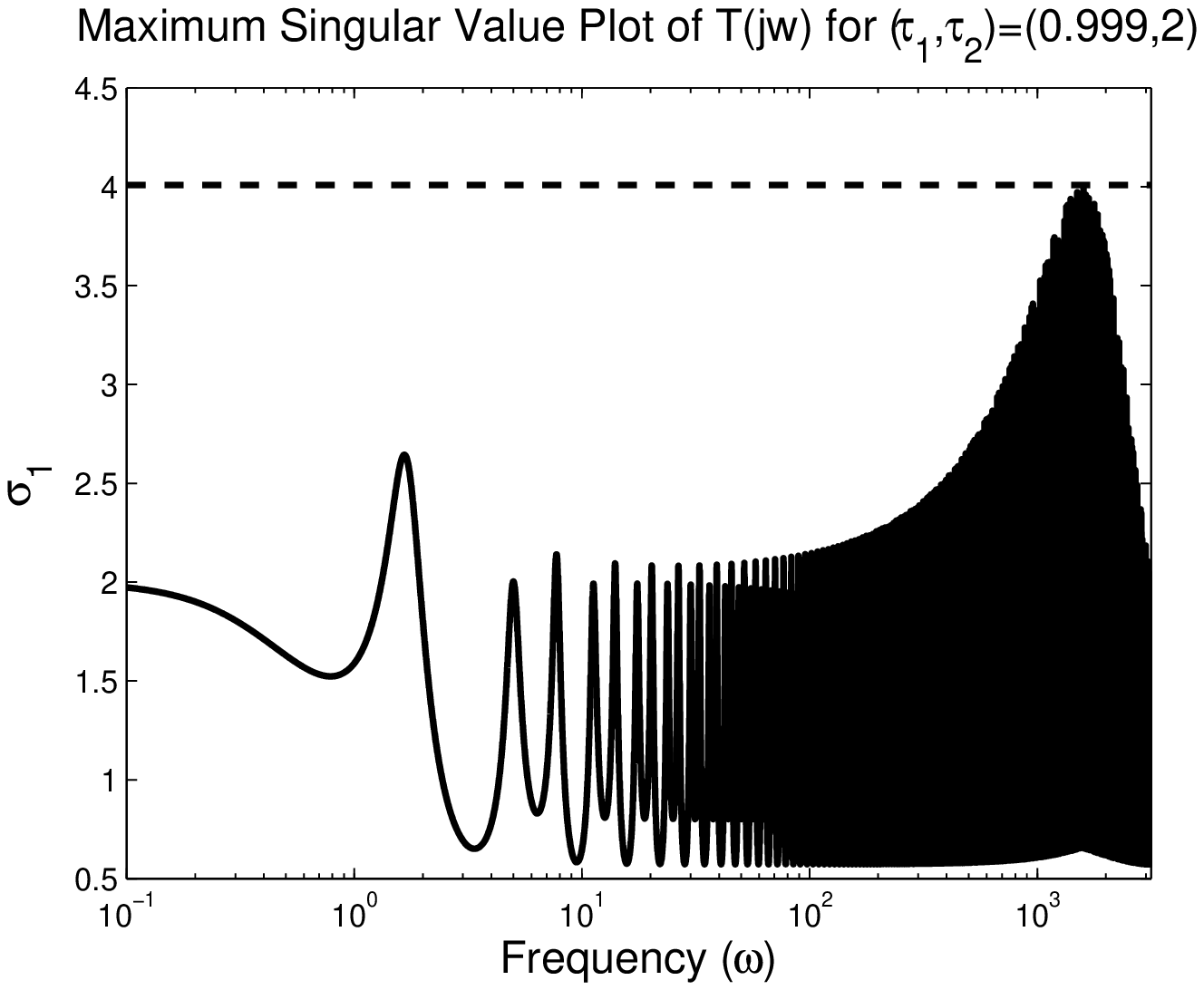}
        \caption{\label{fig:hinfalg_nox} Algorithm~\ref{alg:hinfnorm} for the maximum singular value plot of $T$ (\ref{Tex}): no intersection case.}
   \end{minipage}
\hfill
    \begin{minipage}[t]{0.45\textwidth}
        \vspace{0pt}
        \includegraphics[width=\linewidth]{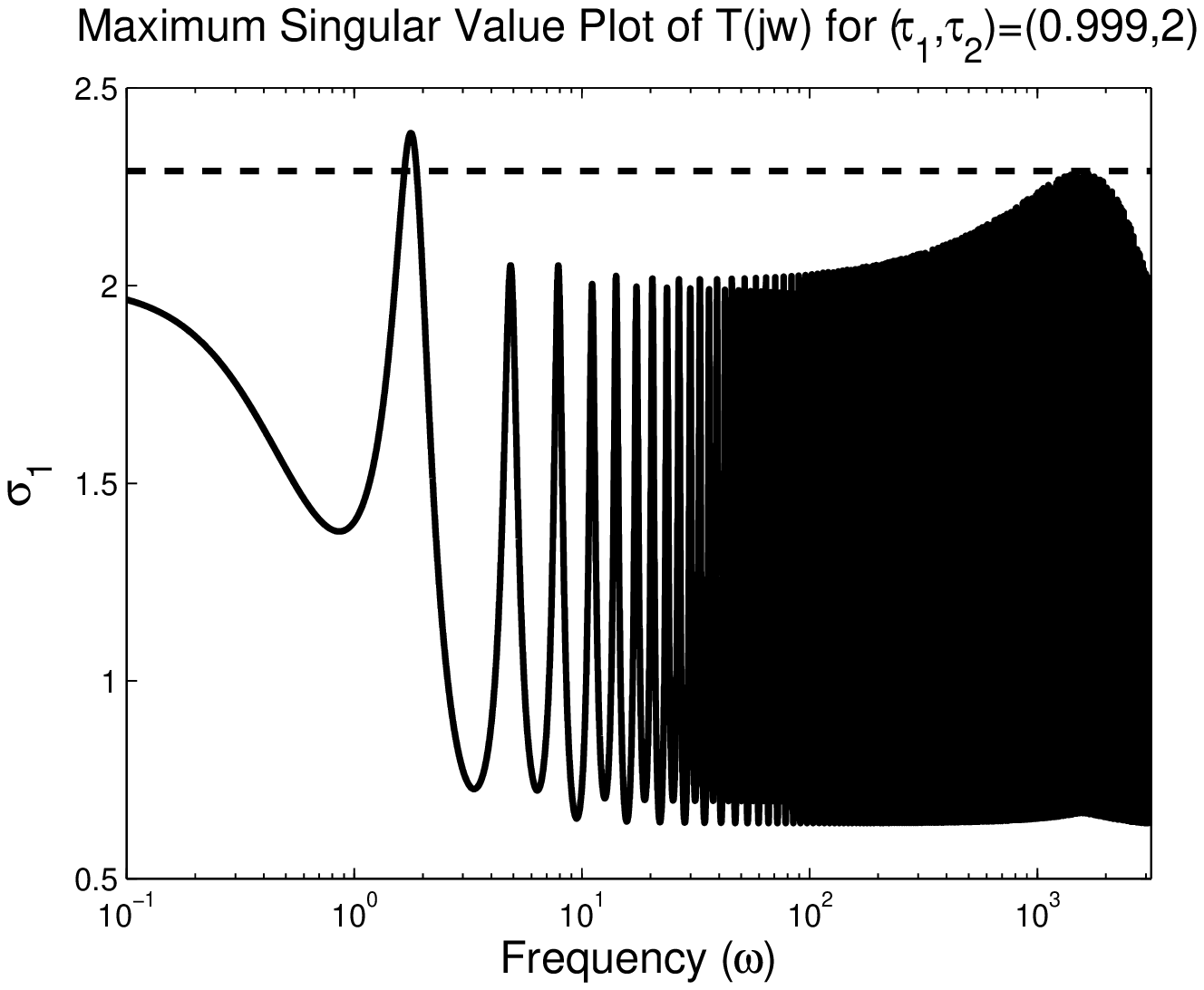}
        \caption{\label{fig:hinfalg_wx} Algorithm~\ref{alg:hinfnorm} for the maximum singular value plot of $T$ (\ref{Tex2}): with intersections case.}
        \end{minipage}
\end{figure}
\begin{figure}
\includegraphics[width=\linewidth]{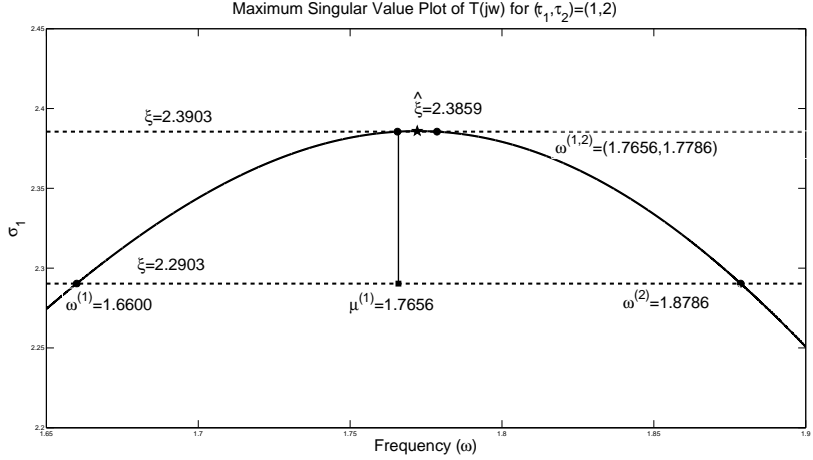}
\caption{\label{fig:hinfalg_wx_zoomed} Steps of Algorithm~\ref{alg:hinfnorm} for the maximum singular value plot of $T$ (\ref{Tex2}): with intersections case (zoomed).}
\end{figure}

We apply Algorithm~\ref{alg:hinfnorm} to the transfer function $T$, as specified by (\ref{Tex}). The strong $\Hi$ norm of its asymptotic transfer function $T_a$ (\ref{Taex}) satisfies $\interleave T_a(j\w,\vec \tau)\interleave_\infty=4$. Therefore, the first level is equal to $\xi_l=4$ in Step $(a)$ of the prediction step of Algorithm~\ref{alg:hinfnorm}, provided that no candidate frequencies are given. In Step $(b)-ii$, there is no intersection for the level $\xi$ as shown in Figure~\ref{fig:hinfalg_nox}. Therefore the strong $\Hi$ norm of $T$ (\ref{Tex}) is equal to $4$ and the correction step is not carried out.
\end{example}

\begin{example}
We consider the transfer function
\begin{equation} \label{Tex2}
T(\lambda,\vec\tau):=\frac{\lambda+2}{\lambda(1-1/16e^{-\lambda\tau_1}+1/2e^{-\lambda\tau_2})+1},
\end{equation}
with $\vec\tau=(1,2)$, and its asymptotic transfer function
\begin{equation} \label{Taex2}
T_a(\lambda,\vec\tau):=\frac{1}{(1-1/16e^{-\lambda\tau_1}+1/2e^{-\lambda\tau_2})}.
\end{equation}
The first level $\xi_l$ in Step $(a)$ is set to the strong $\Hi$ norm of $T_a$ (\ref{Taex2}), $\interleave T_a(j\w,\vec \tau)\interleave_\infty=2.2857$. In Step $(b)-ii$, there are two intersections for the level $\xi=2.2903$, $\w^{(1)}=1.6600$ and $\w^{(2)}=1.8786$, as shown in Figure~\ref{fig:hinfalg_wx}. We can see the details of the next iterations in Figure~\ref{fig:hinfalg_wx_zoomed}. Step $(b)-iii$ calculates the middle frequency $\mu^{(1)}=1.7656$ and $\xi_l=2.3855$ for the next level. In the second iteration, the level is set to $\xi=2.3903$ and the corresponding intersections are $\w^{(1)}=1.7656$ and $\w^{(2)}=1.7786$. Since there is no intersection in the third iteration due to the chosen tolerance in the prediction step, $tol=10^{-3}$, we compute the approximate strong $\Hi$ norm of $T$ (\ref{Tex2}) and the frequencies as $\tilde{\xi}=2.3879$ and $(\tilde{\w}_1,\tilde{\w}_2) =\{1.7657,1.7786\}$. In the correction step, these values are corrected and the strong $\Hi$ norm of $T$ (\ref{Tex2}) and the corresponding frequency  are computed as $\hat{\xi}=2.3859$ and $\hat{\w}=1.7721$.
\end{example}

\section{Fixed-Order H-infinity Controller Design} \label{sec:design}

We consider the equations
\begin{equation}\label{parp}
\left\{\begin{array}{l}
E \dot x(t)=A_0(p) x(t)+\sum_{i=1}^m A_i(p) x(t-\tau_i)+Bw(t),\\
z=C x(t),
\end{array}\right.
\end{equation}
where the system matrices smoothly depend on parameters $p$.
As illustrated in Section~\ref{sec:motex}, a broad class of interconnected systems can be brought into this form, where the parameters $p$ can be interpreted in terms of a parameterization of a controller. For example, in the feedback interconnection of  (\ref{plant}) and (\ref{controller}) they may correspond to the elements of the matrices of the controller $K$. Note that, by fixing some elements of these matrices, additional structure can be imposed on the controller, e.g. a proportional-integrative-derivative (PID) like structure.

 The proposed method for designing fixed-order/ fixed-structure $\Hi$ controllers is based on a direct minimization of the strong $\Hi$ norm of the closed-loop transfer function $T$ from $w$ to $z$ as a function of the parameters $p$.  The overall optimization algorithm requires the evaluation of the objective function and its gradients with respect to the optimization parameters, whenever it is differentiable.

The strong $\Hi$ norm of the transfer function $T$ can be computed by Algorithm~\ref{alg:hinfnorm}. The derivatives of the norm with respect to controller parameters exist whenever there are unique  values $\vec{\hat{\tau}}$ or $\vec {\hat\theta}$ such that
\[
\interleave T(j\w,\vec\tau)\interleave_\infty=\hat\xi=
\left\{\begin{array}{ll}
\sigma_{1}\left(\mathbb{T}_a\left(\vec {\hat \theta}\right)\right),
& \mathrm{if}\ \hat \xi=\interleave T_a\left(\vec {\hat\theta},\vec\tau\right)\interleave_\infty,
\\
\sigma_{1}({T}(j\hat\w)), & \mathrm{if}\ \hat \xi>\interleave T_a\left(\vec {\hat \theta},\vec\tau\right)\interleave_\infty,
\end{array}\right.
\] holds and, in addition, the largest singular value $\hat\xi$ has multiplicity one. We compute the derivative of the strong $\Hi$ norm of $T$ with respect to the  parameter $p_k$ as
\[
\frac{\partial \xi}{\partial p_{k}}=\left\{
\begin{array}{ll}
-2\xi^2\left.\frac{\Re\left(v_a^*\frac{\partial \mathbb{A}_{22}(\vec \theta)}{\partial p_{k}} u_a\right)}
{v_a^*B_2B_2^Tv_a+u_a^*C_2^TC_2u_a}\right|_{(\xi,\vec \theta)=\left(\hat{\xi},\vec{\hat{\theta}}\right)} & \mathrm{if}\ \ \hat{\xi}=\interleave T_a\left(\vec{\hat\theta},\vec \tau\right)\interleave_\infty,\\
-2\xi^2\left.\frac{\Re\left(v^*\frac{\partial A(j\w)}{\partial p_{k}} u\right)}
{v^*BB^Tv+u^*C^TCu}\right|_{(\xi,\w)=(\hat{\xi},\hat{\w})} & \mathrm{if}\ \hat \xi>\interleave T_a\left(\vec{\hat\theta},\vec\tau\right)\interleave_\infty,
\end{array}\right.
\] where given $\xi=\hat \xi$, $u_a,v_a$ and $u,v$ are vectors in (\ref{eq:Tacorrection}) and (\ref{eq:Tcorrection}) for $\vec\theta=\vec{\hat{\theta}}$ and $\w=\hat\w$ respectively. For more details on the computation of derivatives we refer to \cite{thesismarc,bfgbookchapter}.

 The overall design procedure is fully automated and does not require any interaction with the user. The computation cost of the optimization algorithm is dominated by the evaluation of the strong $\Hi$ norm of the closed-loop transfer function $T$ for the parameters $p$ at each iteration. The first main part in this computation is to find the strong $\Hi$ norm of the asymptotic transfer function by computing the maximum singular value of $\mathbb{T}_a$  at $p_a^{m_a-1}$ points spanning the grid $\Theta_h$ in (\ref{Taapprox}),  where $p_a$ is the number of grid points in the interval $[0,\ 2\pi]$ (the default value is $20$ in our implementation)  and $m_a$ is the number of  actual delays appearing in $\mathbb{A}_{22}(\vec \theta)$, see (\ref{defa22}). Note that the number of delays $m_a$  is usually much smaller than the number of system delays, see the arguments at the end of \S\ref{sec:hinfnorm_Ta}. Therefore the computational cost for sweeping is usually not very high. It is even completely skipped if there is no high frequency feedthrough in the control loop (which results in $m_a=0$). The second main part is the computation of the generalized eigenvalues of the pencil (\ref{pencil}) in the prediction step of Algorithm~\ref{alg:hinfnorm}. This computation requires solving a generalized eigenvalue problem with dimensions $2nN$ where the default value for $N$ is $20$ in our implementation. The number of iteration steps of the optimization algorithm heavily depends on the optimization problem under consideration. In most cases, satisfactory results are already obtained in the first phase of the optimization algorithm where the BFGS algorithm is used. For the behavior of BFGS, applied to nonsmooth problems, we refer to \cite{overtonbfgs}.

Recall that  the feedback interconnection of system (\ref{plant}) and controller (\ref{controller}) can be rewritten in the form (\ref{parp}) in such a way that the closed-loop matrices depend affinely on the matrices of the controller. This property improves the performance of the optimization method. Note that in the existing work for systems without delay (see, e.g., \cite{suatHIFOO}) the dependency is in general nonlinear, due to the use of elimination  for handling a non-trivial feedthrough, as illustrated in Example~\ref{elim:connect}.


\section{Examples} \label{sec:ex}

In \S\ref{exex1} we illustrate some aspects of the proposed approach on two motivating examples.  In \S\ref{sec:collection} apply the approach to benchmark examples collected from the literature. In \S\ref{sec:bencmarks} we consider  $5$  additional problems.

\subsection{Motivating Examples}\label{exex1}
As a first example we consider  a plant with the state-space representation
\[
\dot{x}(t)=-x(t)-0.5x(t-1)+w(t)+u(t-0.2), \quad z(t)=x(t)+u(t-0.2),\quad y(t)=x(t),
\] and a controller  $u(t)=Ky(t)$. The closed-loop transfer function can be written as
\begin{equation} \label{ex1:T}
T(s)=\frac{1+Ke^{-0.2s}}{s+1-Ke^{-0.2s}+0.5e^{-s}}.
\end{equation} The closed-loop system is internally stable for $-7.9<K<1.5$. As illustrated in Figure~\ref{fig:tds1}, the closed-loop system achieves the minimum strong $\Hi$ norm $\hat{\xi}$ for $K=-0.8813$. The iterations of the optimization method (starting at $K=-7.4$ and shown in circles) converge to the minimum, $0.2137$.

\begin{figure}[t]
    \begin{minipage}[t]{0.45\textwidth}
        \vspace{0pt}\raggedright
        \includegraphics[width=\linewidth]{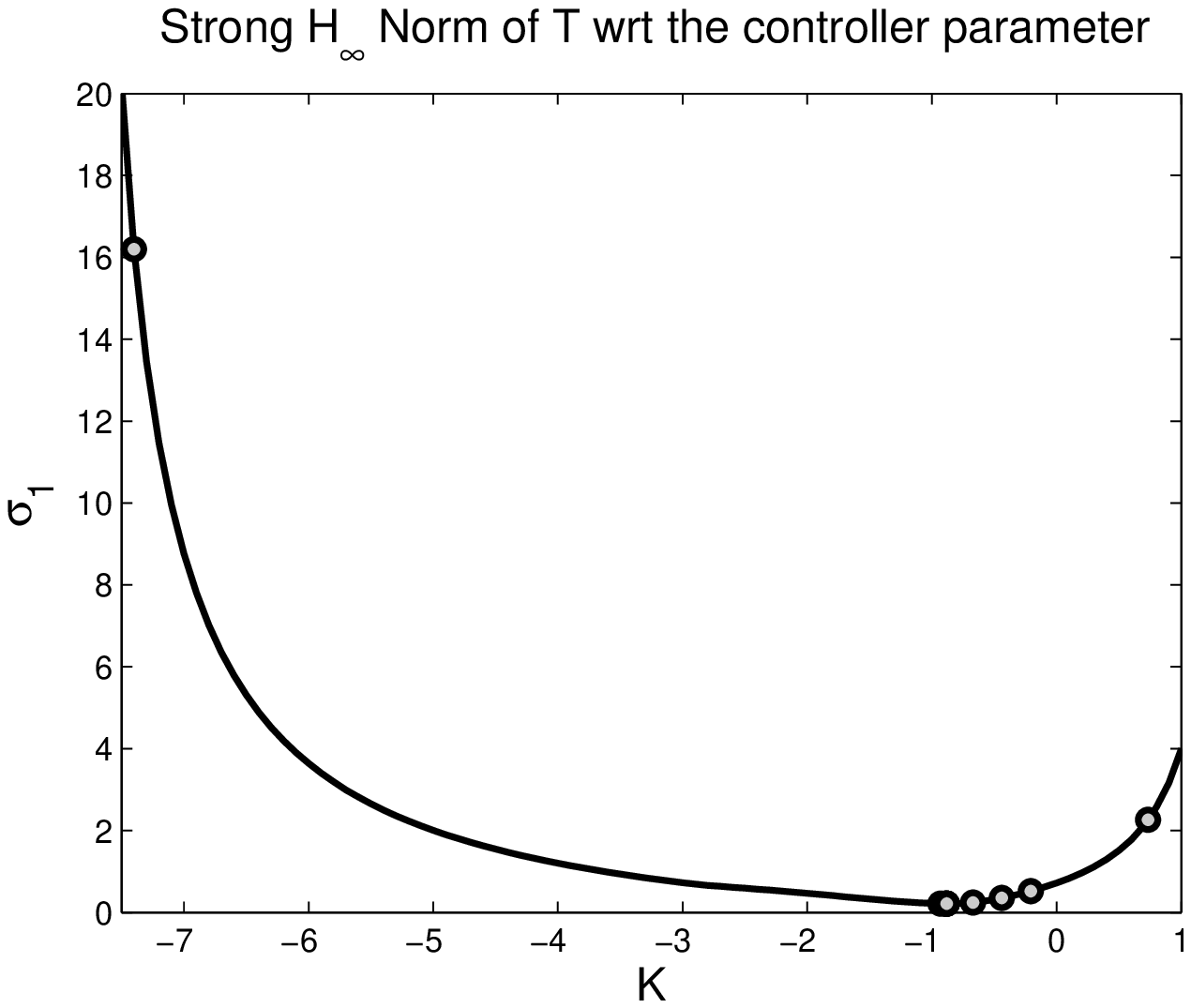}
        \caption{\label{fig:tds1} The strong $\Hi$ norm of $T$ (\ref{ex1:T}) with respect to the controller parameter.}
   \end{minipage}
\hfill
    \begin{minipage}[t]{0.45\textwidth}
        \vspace{0pt}
        \includegraphics[width=\linewidth]{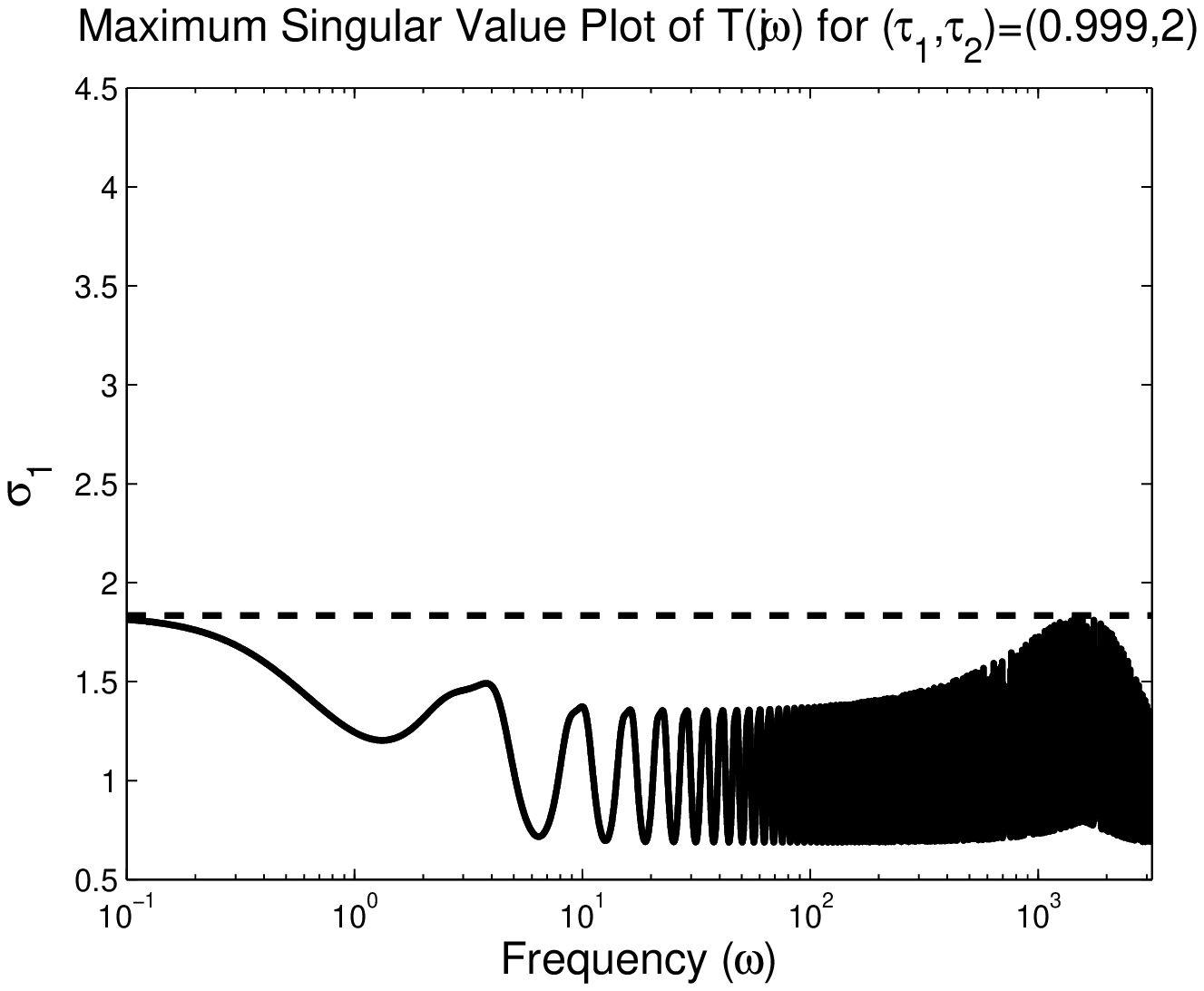}
        \caption{\label{fig:shinfnorm_0992T} The maximum singular value plot of the closed-loop system $T(j\w)$ (\ref{Tex}) for $(\tau_1,\tau_2)=(0.999,2)$ with the controller $K_{opt}$.}
        \end{minipage}
\end{figure}

 The second example concerns the design of a static controller for the case where the standard $\Hi$ norm and the strong $\Hi$ norm of the closed-loop systems are different. The transfer function $T$, as defined by (\ref{Tex}),  can be interpreted as the transfer function of the  closed-loop system formed by the plant
\[
\left\{\begin{array}{rll}
\left[
\begin{array}{cc}
1 & 0 \\
0 & 0
\end{array}
\right] \dot x(t)
&=&
\left[
\begin{array}{cc}
-0.1 & -1 \\
1 & -1
\end{array}
\right] x(t)
+
\left[\begin{array}{c}
0 \\
1
\end{array}\right] u(t)
+ \left[\begin{array}{c}
0 \\
1
\end{array}\right] w(t),\\
 z(t)&=&
 \left[
 \begin{array}{cc}
 2 & -1
 \end{array}
 \right] x(t),\\
 y(t)&=&
 \left[
 \begin{array}{cc}
 0 & 1 \\
 0 & 0
 \end{array}
 \right]
 x(t-\tau_1)
 +
 \left[
 \begin{array}{cc}
 0 & 0 \\
 0 & 1
 \end{array}
 \right]
 x(t-\tau_2),
\end{array}\right.
\]
where $(\tau_1,\tau_2)=(1,2)$ and the controller
\[
u(t)=K y(t),
\] where $K=K_{\textrm{init}}=\left[\begin{array}{cc} 0.25 & -0.5 \end{array}\right]$. In Example~\ref{ex:hinfalg}, we computed the standard $\Hi$ norm $2.5788$ and the strong $\Hi$ norm $4$, as illustrated in Figure~\ref{fig:hinfalg_nox} with slightly perturbed delay values. An optimization of the strong $\Hi$ norm results in
\[
K=K_{opt}=\left[\begin{array}{cc} -0.3533 & -0.1012 \end{array}\right]
\] and the corresponding optimal value is given by $1.8333$. As shown in Figure~\ref{fig:shinfnorm_0992T}, the optimization method pushes the strong $\Hi$ norm of the asymptotic transfer function until it is equal to the standard $\Hi$ norm. Hence, the minimum is characterized by a balance between low and high frequency behavior of the transfer function.

\subsection{A collection of examples from the literature} \label{sec:collection}

We collected benchmark examples for $\Hi$ optimization of time-delay systems from the literature. We considered two types of problems: the $\Hi$ optimization with \emph{state} and \emph{output} feedback controllers. Our results are given in Table~\ref{table:state} and \ref{table:output} respectively.
\begin{table}[!h]
\begin{center}
\begin{tabular}{cccl}
  \hline
  \hline
  Problem & Other Methods & Results & Computed Controller\\
  \hline
  Ex. $4$, \cite{Fridman:2002:DESCRIPTOR} & $1.8822$, \cite{Fridman2001SCL} & $0.1000$ & {\small $\left[-2.3273,\  -9.5004\ 10^3\right]$}\vspace{.5mm}\\
   & $0.2284$, \cite{Fridman2001TAC}  & & \vspace{.5mm}\\
   & $0.1287$, \cite{Fridman:2002:DESCRIPTOR}  & &\vspace{.5mm}\\
  \hline
  Ex. $1$, \cite{Fridman1998SCL} & $0.4215$, \cite{Fridman1998SCL} & $0.4005$ & {\small $\left[-17.8065,\  9.5915\right]$}\vspace{.5mm}\\
  \hline
  Ex. $2$, \cite{fridman} & $21$, \cite{fridman} & $2.9091$ & {\small $\left[-1.1151\ 10^3,\ -1.6189\ 10^4 \right]$}\vspace{.5mm}\\
  \hline
  \hline
\end{tabular}
\end{center}
\caption{The achieved $\Hi$ performances by state-feedback controllers} \label{table:state}
\end{table}

The strong $\Hi$ norm of the closed-loop system in Example $4$ of~\cite{Fridman:2002:DESCRIPTOR} with respect to controller parameters is visualized in Figure~\ref{fig:tds2}. The closed-loop system is stable for sufficiently large negative $K_2$ and its norm converges to $0.1$ as $K_2\rightarrow -\infty$, for any value of $K_1$. This is confirmed by the property that all suggested controllers in the literature have large negative $K_2$ values. The starting point $K=\left[0,\  -5\right]$ and other points are shown as a red dot and gray dots respectively in Figure~\ref{fig:tds2}. The iterations of the optimization method get closer to the value $0.1$ for large negative values of $K_2$ without a significant change in the $K_1$ parameter. Note that there is no finite minimum and the algorithm stops when the number of iterations exceeds the maximum number of iterations in the algorithm. By analyzing similar plots as in Figure~\ref{fig:tds2}, we confirmed that the optimization method reaches close to optimal values for the other two examples in Table~\ref{table:state}. Note that Example $2$ concerns a DDAE while the others concern retarded time-delay systems.

The strong $\Hi$ norm of the closed-loop system in Example $4$ of~\cite{Fridman:2002:DESCRIPTOR} with respect to controller parameters is given in Figure~\ref{fig:tds2}. The closed-loop system is stable for sufficiently large negative $K_2$ and its norm converges to $0.1$, for any value of $K_1$. This is confirmed by the property that all suggested controllers in the literature have large negative $K_2$ values. The starting point $K=\left[0,\  -5\right]$ and other points are shown as a red dot and gray dots respectively in Figure~\ref{fig:tds2}. The iterations of the optimization method get closer to the value $0.1$ for large negative values of $K_2$ without a significant change in the $K_1$ parameter. By analyzing similar plots as in Figure~\ref{fig:tds2}, we confirmed that the optimization method reaches close to optimal values for the other two examples in Table~\ref{table:state}. Note that Example $2$ concerns a DDAE while the others concern retarded time-delay systems.

\begin{figure}[h]
\centering
\vspace{0pt}
\includegraphics[width=0.7\linewidth]{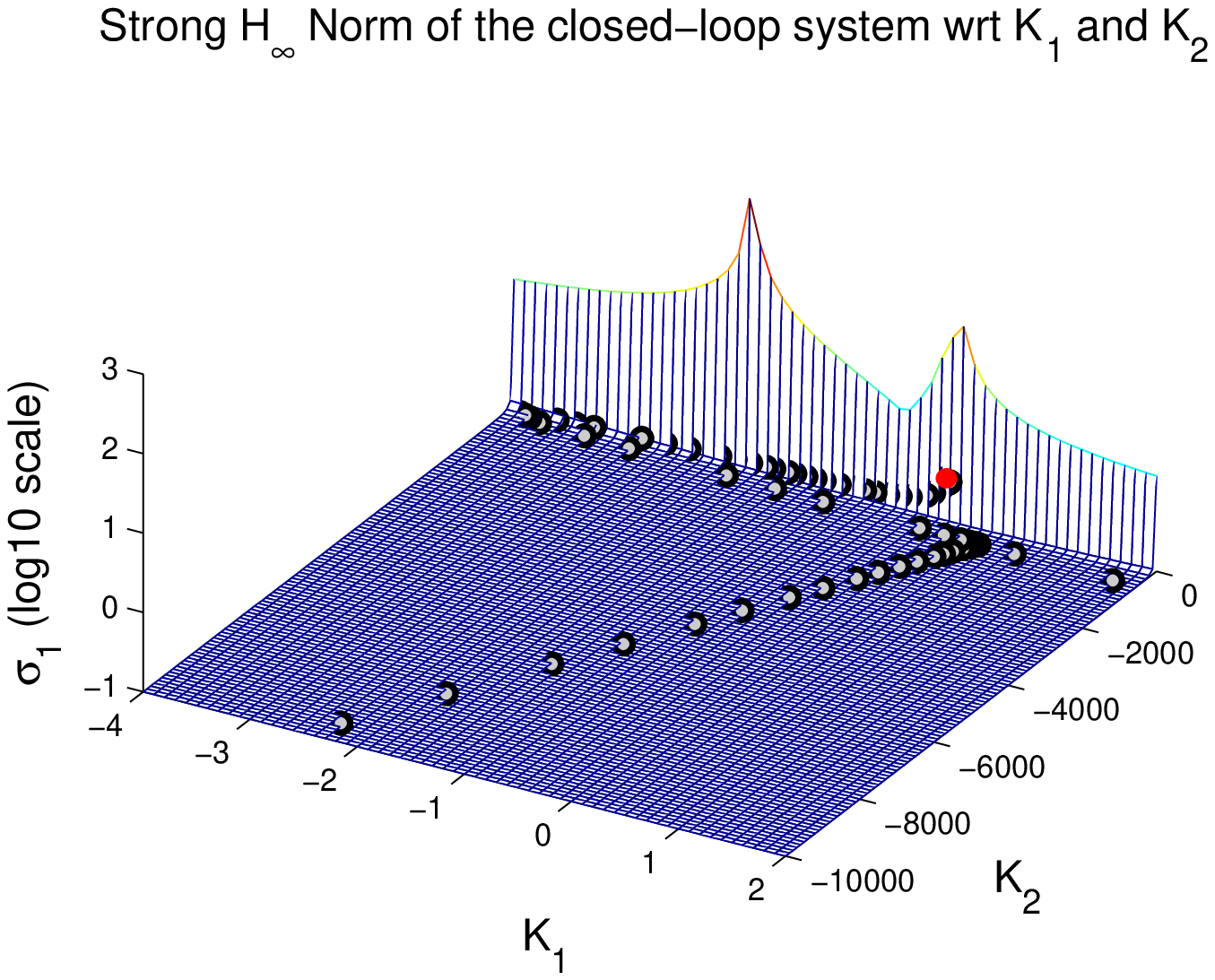}
\caption{\label{fig:tds2} The strong $\Hi$ norm of the closed-loop system in Example $4$ \cite{Fridman:2002:DESCRIPTOR} with respect to controller parameters, $K=\left[K_1\ K_2\right]$.}
\end{figure}

\begin{table}[h]
\begin{center}
\begin{tabular}{lccl}
  \hline
  \hline
  Problem & Other Methods & Results & Computed Controller\\
  \hline
  Ex. $3$, \cite{fridman} & $2.4$, \cite{fridman} & $3.7654$ & {\small $\left[-8.6961\right]$}\vspace{.5mm}\\
     &  & $1.2618$ & {\small $\left[\begin{array}{c|c}
                                       -7.1827 & -37.3389 \\
                                       \hline
                                       18.6767 & 90.4893
                                     \end{array}
   \right]$}\vspace{.5mm}\\
     &  & $1.2428$ & {\small $\left[\begin{array}{cc|c}
                                       -2.6837 & -15.1028 & -6.2101 \\
                                        0.3607 &   1.2086 &  3.6959\\
                                       \hline
                                        0.1379 &  -3.9720 & 10.4548
                                     \end{array}
   \right]$}\vspace{.5mm}\\
  \hline \\[-6pt]
  Ex. $4$, \cite{fridman} & $0.8600$, \cite{Fridman:2002:DESCRIPTOR} & $0.1617$ & {\small $\left[-16.1692\right]$}\vspace{.5mm}\\
  ($h=0.999$) & $11$, \cite{fridman}  & & \vspace{.5mm}\\
  \hline \\[-6pt]
  Ex. $4$, \cite{fridman} & $20$, \cite{fridman} & $0.1617$ & {\small $\left[-16.1692\right]$}\vspace{.5mm}\\
  ($h=1.28$) &   & & \vspace{.5mm}\\
  \hline
  \hline
\end{tabular}
\end{center}
\caption{The achieved $\Hi$ performances by output-feedback controllers} \label{table:output}
\end{table}

The designed controllers for the examples in Table~\ref{table:output} have a state-feedback-observer structure. The observer is a time-delay system and estimates the states of the original plant. Example $3$ is described by a DDAE and Example $4$ is a retarded time-delay system. The strong $\Hi$ norms of the closed-loop system in Example $4$ of \cite{fridman} for $h=0.999$ and $h=1.28$ with respect to controller parameters are given in Figure~\ref{fig:tds3} and \ref{fig:tds4}. The optimal controller gain does not change for two different delays. In both cases, the optimization method reaches the optimal value.

\begin{figure}[!h]
    \begin{minipage}[t]{0.45\textwidth}
        \vspace{0pt}\raggedright
        \includegraphics[width=\linewidth]{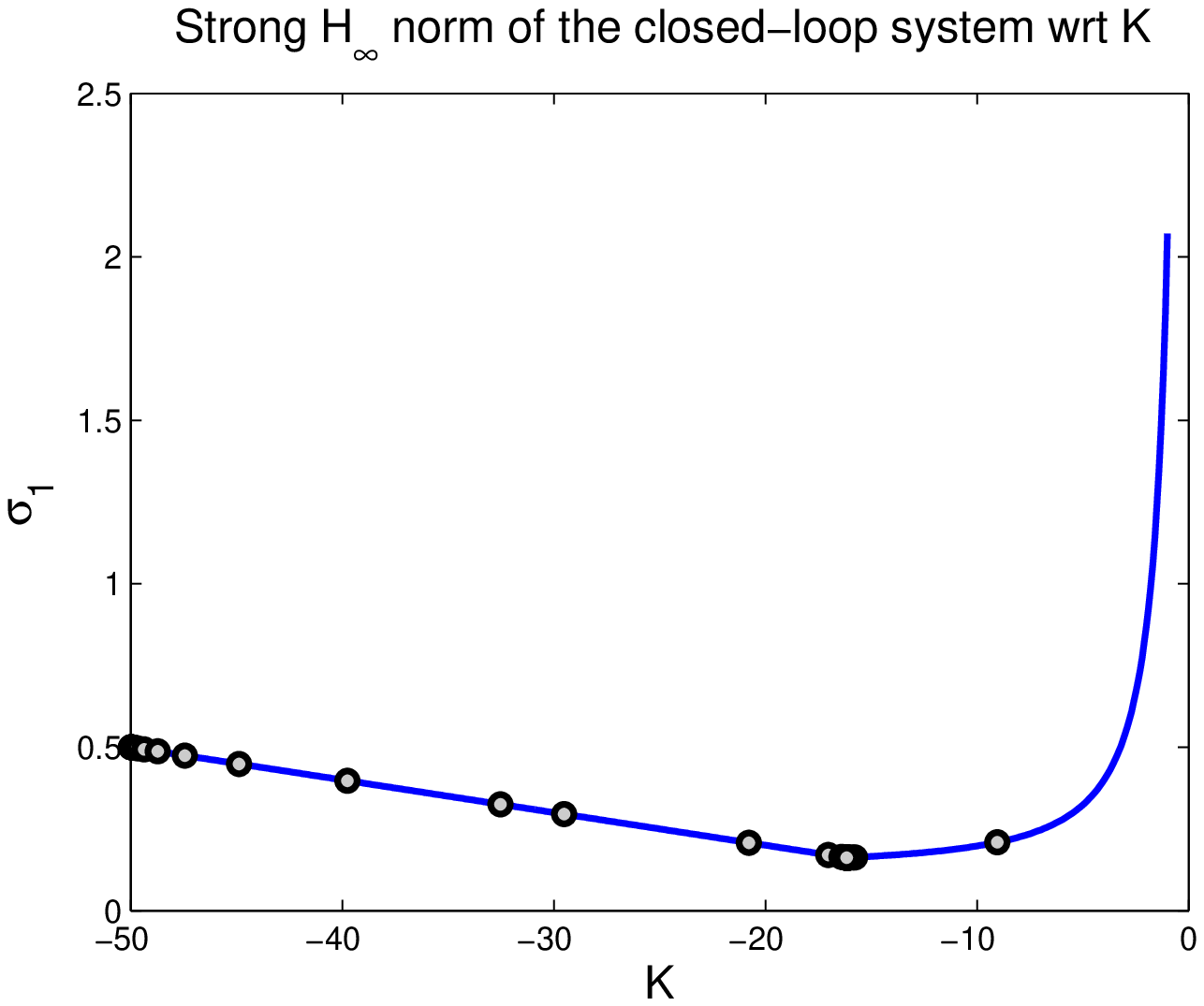}
        \caption{\label{fig:tds3} The strong $\Hi$ norm of the closed-loop system in Example $4$ ($h=0.999$) \cite{fridman} with respect to the controller parameter.}
   \end{minipage}
\hfill
    \begin{minipage}[t]{0.45\textwidth}
        \vspace{0pt}
        \includegraphics[width=\linewidth]{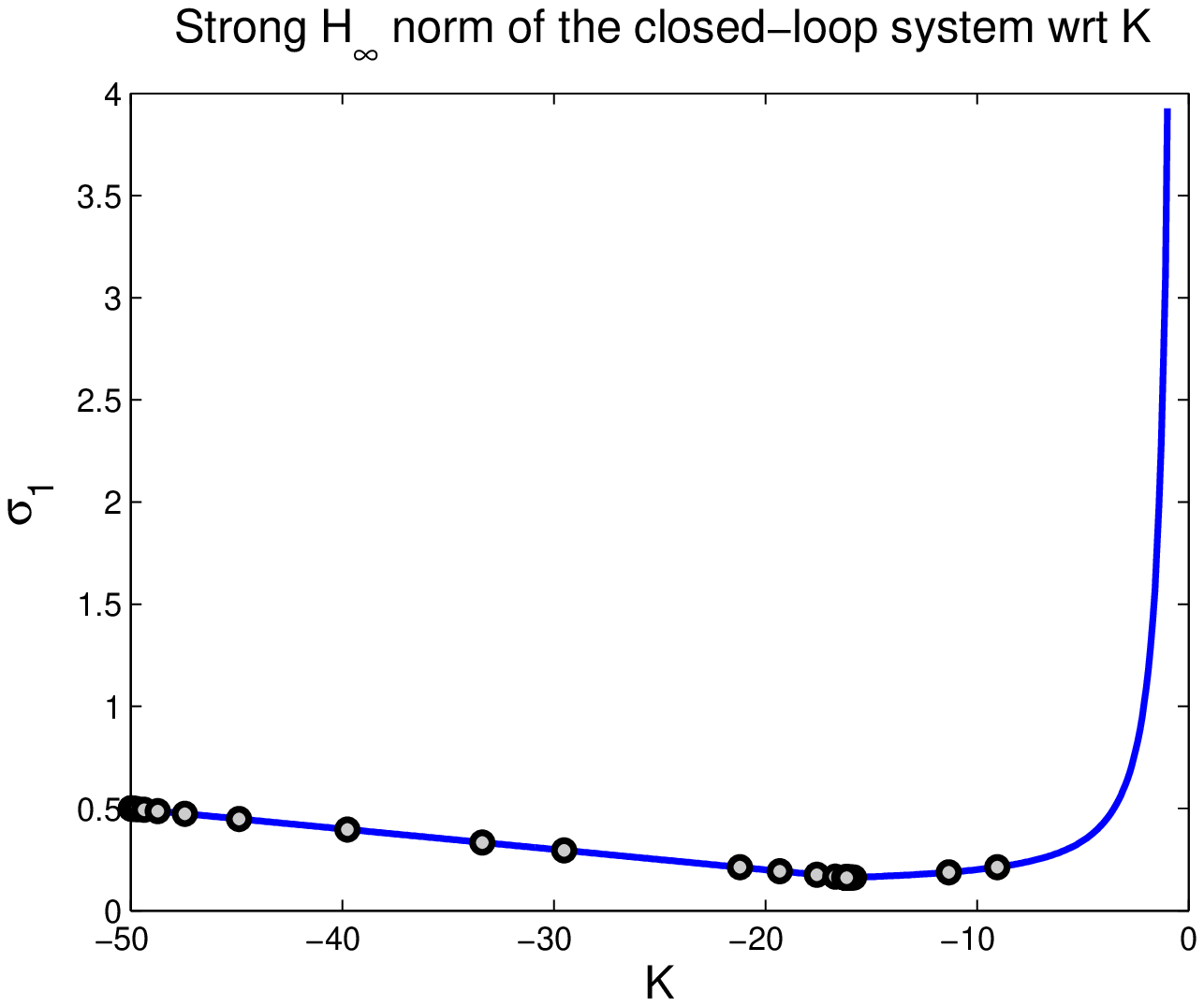}
        \caption{\label{fig:tds4} The strong $\Hi$ norm of the closed-loop system in Example $4$ ($h=1.28$) \cite{fridman} with respect to the controller parameter.}
        \end{minipage}
\end{figure}

In \cite{robustpaper} the robust stabilization of a time-delay system of retarded type by a static state-feedback controller is addressed. This problem is formulated as a $\Hi$ synthesis problem. The approximate $\Hi$ norm is computed using a frequency grid. The minimization is performed by a continuation approach where the highest peak values in the singular value plot are monitored. We applied our method on the numerical problem in Section $5$ of this reference and obtained a similar result. The optimized $\Hi$ norm is $3.3145$ and the corresponding state-feedback controller is given by
\[
K=\left[ \begin{array}{ccc}0.7763 & 1.1119 & 0.5433 \end{array} \right].
\]

In \cite{quasidirect} a state-feedback controller is designed for time-delay systems based on quasi-direct pole placement. This approach allows to assign a number of fixed right-most poles while shifting the remaining part of the spectrum as far to the left as possible. We designed a static $\Hi$ controller for the experimental heat transfer set-up described in Section~3 of \cite{quasidirect}. This is an $11^{\textrm{th}}$-order retarded time-delay system with $5$ state  delays and $1$ input delay,

\[
\left\{\begin{array}{l}
\dot{x}(t)=A_0x(t)+\sum_{i=1}^5 A_i x(t-\tau_i)+w(t)+Bu(t-\tau_6), \\
z(t)=x(t),\quad y(t)=x(t)
\end{array}\right.
\] where system matrices and delays, $A_0$, $A_i$ and $\tau_i$ for $i=0,\ldots,5$ are given in \cite{quasidirect}. The performance channels are set to identity matrices. The controller parameters of four static controllers designed using quasi-direct pole-placement are given in Table $1$ of \cite{quasidirect}. The closed-loop strong $\Hi$ norms with these controllers are $779.1600$, $1881.3944$, $1155.7140$, $2113.8085$. We achieved a minimal closed-loop strong $\Hi$ norm $386.3491$ by a static $\Hi$ controller, $u(t)=K x(t)$, where
\begin{multline*}
K=[-1.3414\ \ -5.7544\ \ 1.0440\ \ 0.5181\ \ -29.9649\ \ -5.0182\ \    \\
-12.4284\ \ 0.6694\ \ 4.7125\ \ -23.6380\ \ 2.3902].
\end{multline*}

Finally, in~\cite{bfgbookchapter} a direct optimization approach is applied to the design of fixed-order $\Hi$ controllers for a class of retarded time-delay systems where the controller has no feedthrough term. The second example in \cite{bfgbookchapter} is a $4^\textrm{th}$-order time-delay system with $4$ delays. The system is stable and its $\Hi$ norm is $1.3907$. In Table~\ref{table:bfgbook}, we present our results for different controller orders $n_K$ for this example, without any additional restriction on the controller.

\begin{table}[h]
\begin{center}
\begin{tabular}{ccl}
  \hline
  \hline
  $n_K$  & Results & Computed Controller\\
  \hline
  1 & $1.2513$ & {\small $\left[\begin{array}{c|c}
                                       -0.3068 & 0.9590 \\
                                       \hline
                                       0.0166 & 0.0186
                                     \end{array}
   \right]$}\vspace{.5mm}\\

  2 & $1.2508$ & {\small $\left[\begin{array}{cc|c}
                                       -0.0959 & -0.0624 & -0.0982\\
                                       -0.0024 & -0.1984 &  0.0883 \\
                                       \hline
                                       -0.0756 & 0.0347 & 0.0234
                                     \end{array}
   \right]$}\vspace{.5mm}\\

  3 & $1.2493$ & {\small $\left[\begin{array}{ccc|c}
   -0.0861 &  -0.0673 &  -0.0953 & -0.0519 \\
    0.0046 &  -0.2170 &  -0.0233 &  0.1083 \\
   -0.0016 &   0.0010 &  -0.2973 &  0.1995 \\
   \hline
   -0.1734 &  -0.1040 &  -0.0475 &  0.0362
                                     \end{array}
   \right]$}\vspace{.5mm}\\
  \hline
  \hline
\end{tabular}
\end{center}
\caption{The achieved $\Hi$ performances for the time-delay system in \cite{bfgbookchapter} by dynamic controllers} \label{table:bfgbook}
\end{table}

\subsection{Benchmark results} \label{sec:bencmarks}

In Table~\ref{table:bench}, we present the results of benchmarking of our code with $5$  additional problems. The plants are retarded time-delay systems of the form (\ref{plant}). The second column shows the size of matrices $A_i$, $n$, and the total number of time-delays in the plant, $m$. The third column gives order $n_K$ of the (finite-dimensional) controller. The first line for each plant displays the closed-loop strong $\Hi$ norm when there is no controller (excepting the fourth plant which is unstable). The fourth and fifth columns contain the optimized strong $\Hi$ norm of the closed-loop system and the corresponding controller.

\begin{table}[!h]
\begin{center}
\begin{tabular}{ccccl}
  \hline
  \hline
  Plants & (n,m) & $n_K$ & Results & Computed Controller\\
  \hline
  \#1 & (2,2) & $-$ & $44.7086$  & $-$\vspace{.5mm}\\
      &       & $0$ & $5.1499$  &{\small $\left[-2.8858\right]$}\vspace{.5mm}\\
      &       & $1$ & $5.1037$  &{\small $\left[
      \begin{array}{c|c}
        -3.8529 & 1.4636 \\
        \hline
        -1.8703 & -2.0389
      \end{array}
      \right]$}
      \vspace{.5mm}\\
      &       & $2$ & $5.1029$  &{\small $\left[
      \begin{array}{cc|c}
      -4.6132  &  6.5495 & -2.2860 \\
      -3.1847  & -4.7981 &  2.4295 \\
      \hline
      -3.9177  &  1.0307 &  -2.7309 \\
      \end{array}
      \right]$}
      \vspace{.5mm}\\
  \hline
  \#2 & (3,5) & $-$ & $17.3595$  & $-$\vspace{.5mm}\\
      &       & $0$ & $1.9604$  &{\small $\left[0.0641\right]$}\vspace{.5mm}\\
      &       & $1$ & $1.9004$  &{\small $\left[
      \begin{array}{c|c}
        -0.1140 & 1.9984 \\
        \hline
        -0.0083 & 0.0626
      \end{array}
      \right]$}
      \vspace{.5mm}\\
      &       & $2$ & $1.8695$  &{\small $\left[
      \begin{array}{cc|c}
      -0.4458  & -0.0416 & 1.9645 \\
      -0.4998  & -0.4274 & 0.9799 \\
      \hline
      -0.0223  &  0.0125 & 0.0653 \\
      \end{array}
      \right]$}
      \vspace{.5mm}\\
  \hline
  \#3 & (4,7) & $-$ & $22.5669$  & $-$\vspace{.5mm}\\
      &       & $0$ & $7.9522$  &{\small $\left[
      \begin{array}{c}
      -0.4435 \\
      -2.5784
      \end{array}
      \right]$}\vspace{.5mm}\\
      &       & $1$ & $7.9220$  &{\small $\left[
      \begin{array}{c|c}
        -2.3222 & 0.3329 \\
        \hline
        0.5662  & -0.4290 \\
        -0.9636 & -2.5278
      \end{array}
      \right]$}
      \vspace{.5mm}\\
      &       & $2$ & $7.9071$  &{\small $\left[
      \begin{array}{cc|c}
        -2.3261 &  0.3181 &  0.2772 \\
        -0.7214 & -1.4068 &  0.1320 \\
        \hline
         0.5882 & -0.2219 & -0.4312 \\
        -0.9382 &  0.3012 & -2.5353
      \end{array}
      \right]$}
      \vspace{.5mm}\\
  \hline
  \#4 & (6,2) & $0$ & $165.0013$  &{\small $\left[-2\right]$}\vspace{.5mm}\\
      &       & $0$ & $0.4199$  &{\small $\left[-3778.5600\right]$}\vspace{.5mm}\\
      &       & $1$ & $0.2275$  &{\small $\left[
      \begin{array}{c|c}
        -121.3792 & -1.4191 \\
        \hline
         -27.6931 & -13.8822
      \end{array}
      \right]$}
      \vspace{.5mm}\\
      &       & $2$ & $0.1081$  &{\small $\left[
      \begin{array}{cc|c}
        -2.3590 &  9.2505 & -11.2683 \\
        -1.5130 & -3.6363 &   5.8830 \\
        \hline
        -5.9382 &  7.9907 & -23.7252
      \end{array}
      \right]$}
      \vspace{.5mm}\\
  \hline
  \#5 & (8,6) & $-$ & $1.4291$  &$-$ \vspace{.5mm}\\
      &       & $0$ & $0.4751$  &{\small $\left[-1.9792\right]$}\vspace{.5mm}\\
      &       & $1$ & $0.2818$  &{\small $\left[
      \begin{array}{c|c}
        -1.3299 & 1.2660 \\
        \hline
        -6.1379 & -1.0250
      \end{array}
      \right]$}
      \vspace{.5mm}\\
      &       & $2$ & $0.2809$  &{\small $\left[
      \begin{array}{cc|c}
        -1.7327 &   1.9221 &  -4.0733 \\
        -2.6273 &  -2.0103 &  -2.7046 \\
        \hline
         2.6669 &  -3.2341 &  -1.4181
      \end{array}
      \right]$}
      \vspace{.5mm}\\
  \hline
  \hline
\end{tabular}
\end{center}
\caption{The achieved $\Hi$ performances for benchmark problems} \label{table:bench}
\end{table}

The problem data for the above benchmark examples and a MATLAB implementation of our code for the strong $\Hi$ controller design are available at the website
\begin{verbatim}
http://twr.cs.kuleuven.be/research/software/delay-control/hinfopt/.
\end{verbatim}

\section{Conclusions} \label{sec:conc}


We considered the fixed-order/fixed-structure $\Hi$ controller design problem for delay differential algebraic systems. The main contributions are as follows.
\begin{enumerate}
\item We show that a very broad class of interconnected systems can be brought in the standard form (\ref{system}) in a systematic way. Input/output delays and direct feedthrough terms can be dealt with by introducing slack variables. The dependence of the closed-loop matrices on the controller parameters always remains linear.
\item We demonstrated the sensitivity of the $\Hi$ norm w.r.t.~small delay perturbations and  introduced the \emph{strong $\Hi$ norm} for DDAEs, inline with the notion of strong stability, and we analyzed its properties.
\item We presented a predictor-corrector algorithm for the (strong) $\Hi$ norm computation of DDAEs.
\item Based on the numerical algorithm for the strong $\Hi$ norm and its gradient computation with respect to controller parameters, we applied non-smooth, non-convex optimization methods for designing controllers with a fixed order or structure.
\end{enumerate}
The presented approach has been validated by numerical examples. An implementation of the algorithms is available from
\begin{verbatim}
http://twr.cs.kuleuven.be/research/software/delay-control/hinfopt/.
\end{verbatim}

\section*{Acknowledgements}
This article present results of the Belgian Programme on Interuniversity Poles of Attraction, initiated by the
Belgian State, Prime Minister's Office for Science, Technology and Culture, of the Optimization in Engineering Centre OPTEC, and of the project STRT1-09/33 of the K.U.Leuven Research Council.

\bibliographystyle{plain}
\bibliography{referentielijst,otherref}

\appendix

\section{Some technical lemmas}\label{sec:Appendix2}

\begin{lemma}\label{propconverge2}
For all $\gamma>0$, there exist numbers $\epsilon>0$ and  $\Omega>0$ such that
\[
\sigma_{1}\left(T(j\w,\vec r)-T_a(j\w,\vec r)\right)<\gamma
\]
 fir all $\w>\Omega$ and $\vec r\in\mathcal{B}(\vec\tau,\epsilon)\cap(\mathbb{R}^+)^m$.
\end{lemma}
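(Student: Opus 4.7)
The strategy is to derive a uniform (in the delays $\vec r$) version of the high-frequency decay argument used in the proof of Proposition \ref{propconverge}. I would start by partitioning the block matrix in (\ref{transferblock}) via the Schur complement with respect to the $(1,1)$ block. Writing
\[
M_{11}(\lambda,\vec r)=\lambda E^{(11)}-A_{11}(\lambda,\vec r),\quad
S(\lambda,\vec r)=-A_{22}(\lambda,\vec r)-A_{21}(\lambda,\vec r)M_{11}(\lambda,\vec r)^{-1}A_{12}(\lambda,\vec r),
\]
the standard Schur formula for the block inverse gives an explicit decomposition of $T(\lambda,\vec r)$ into five terms, exactly one of which equals $-C_2 S(\lambda,\vec r)^{-1}B_2$; all other terms contain at least one factor $M_{11}(\lambda,\vec r)^{-1}$.

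Next I would establish three uniform bounds on $\mathcal{B}(\vec\tau,\epsilon)\cap(\mathbb R^+)^m$, for some $\epsilon>0$ small enough:
\begin{enumerate}
\item $\|A_{kl}(j\omega,\vec r)\|\le \sum_{i=0}^m\|A_i^{(kl)}\|$, independent of $\omega$ and $\vec r$.
\item $\|M_{11}(j\omega,\vec r)^{-1}\|=O(1/\omega)$ as $\omega\to\infty$, uniformly in $\vec r$, since $E^{(11)}$ is invertible (cf.\ the paragraph below (\ref{coupled})) and $A_{11}(j\omega,\vec r)$ is uniformly bounded in $(\omega,\vec r)$; this can be made quantitative by writing $M_{11}=j\omega E^{(11)}(I-(j\omega)^{-1}(E^{(11)})^{-1}A_{11})$ and using a Neumann series for $\omega$ large enough.
\item $\sup_{\Re\lambda\ge 0,\,\vec r\in\mathcal{B}(\vec\tau,\epsilon)}\|A_{22}(\lambda,\vec r)^{-1}\|<\infty$.
\end{enumerate}
Bound~(iii) is the crux and is the step I expect to be the main obstacle: the finiteness of (\ref{normA22}) for the single value $\vec\tau$ (stated in the proof of Proposition~\ref{propconverge}) only uses Assumption~\ref{assumption_sstab} at the nominal delays, whereas here we need to keep it finite under small delay perturbations. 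This is precisely the content of robustness of strong stability of the delay-difference equation associated with $A_{22}$, as developed in the references \cite{have:02,Michiels:2007:MULTIVARIATE} cited in the paper: Assumption~\ref{assumption_sstab} implies that the spectral radius condition defining strong stability is \emph{strictly} satisfied, and this strict condition is preserved on a whole neighbourhood $\mathcal{B}(\vec\tau,\epsilon)$; combining this with an argument of the type used in establishing (\ref{normA22}) yields the uniform bound~(iii).

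With (i)--(iii) in hand, $S(j\omega,\vec r)$ is a uniformly bounded perturbation of $-A_{22}(j\omega,\vec r)$ of size $O(1/\omega)$, so by a Neumann-series argument $S(j\omega,\vec r)^{-1}$ exists and stays uniformly bounded for $\omega$ large and $\vec r\in\mathcal{B}(\vec\tau,\epsilon)$, and moreover $\|S(j\omega,\vec r)^{-1}+A_{22}(j\omega,\vec r)^{-1}\|=O(1/\omega)$ uniformly. Substituting this into the Schur decomposition, every term different from $-C_2 A_{22}(j\omega,\vec r)^{-1}B_2=T_a(j\omega,\vec r)$ is $O(1/\omega)$ uniformly in $\vec r\in\mathcal{B}(\vec\tau,\epsilon)$. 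Choosing $\Omega$ large enough that this $O(1/\omega)$ quantity is below $\gamma$ finishes the proof.
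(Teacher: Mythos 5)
Your proposal is correct and follows essentially the same route as the paper: the paper's proof of Lemma~\ref{propconverge2} likewise rests on the block-inverse (Schur complement) expansion of (\ref{transferblock}) from Proposition~\ref{propconverge}, with the uniformity over $\vec r\in\mathcal{B}(\vec\tau,\epsilon)$ obtained exactly from your step (iii), namely that the bound (\ref{normA22}) persists under small delay perturbations because Assumption~\ref{assumption_sstab} is a \emph{strong} stability assumption. The only blemish is a harmless sign slip: the relevant term in the block inverse is $+C_2 S(\lambda,\vec r)^{-1}B_2$, which tends to $-C_2A_{22}^{-1}B_2=T_a$ since $S\to -A_{22}$, consistent with your final estimate $\|S^{-1}+A_{22}^{-1}\|=O(1/\omega)$.
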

\begin{proof}
   The uniformity of the bound $\gamma$ w.r.t.~small delay perturbations  stems from the fact that the bound (\ref{normA22}) is a continuous function of the delays $\vec \tau$ at their nominal values. The latter is implied  by  the \emph{strong} stability assumption (Assumption~\ref{assumption_sstab}).
\end{proof}

\begin{lemma} \label{lem:fininter}
Let $\xi>\interleave T_a(j\w,\vec \tau)\interleave_\infty$ hold.
Then there exist real numbers $\epsilon>0,\ \Omega>0$ and an integer $N$ such that for any $\vec r\in\mathcal{B}(\vec\tau,\epsilon)\cap(\mathbb{R}^+)^m$, the number of frequencies $\w^{(i)}$ such that
\begin{equation}
\sigma_k\left(T(j\w^{(i)},\vec r)\right)=\xi,
\end{equation}
for some $k\in\{1,\ldots,n\}$, is smaller then $N$, and, moreover, $|\w^{(i)}|<\Omega$.
\end{lemma}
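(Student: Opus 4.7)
My plan is to establish the two conclusions of the lemma separately: first that every intersection frequency lies in a bounded interval $[-\Omega,\Omega]$, and second that their total number is bounded by a finite integer $N$ uniformly in $\vec r$.

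For the confinement to $[-\Omega,\Omega]$, I would set $\delta := (\xi - \interleave T_a(j\w,\vec\tau)\interleave_\infty)/3 > 0$ and combine Lemma~\ref{propconverge2} with the definition of the strong $\Hi$ norm. Applying Lemma~\ref{propconverge2} with $\gamma = \delta$ yields $\epsilon_1, \Omega > 0$ with $\sigma_1(T(j\omega, \vec r) - T_a(j\omega, \vec r)) < \delta$ whenever $|\omega| > \Omega$ and $\vec r \in \mathcal{B}(\vec\tau, \epsilon_1) \cap (\R^+)^m$. The map $\epsilon \mapsto \sup\{\|T_a(j\w, \vec r)\|_\infty : \vec r \in \mathcal{B}(\vec\tau, \epsilon) \cap (\R^+)^m\}$ is non-increasing and converges to $\interleave T_a(j\w,\vec\tau)\interleave_\infty$, so there exists $\epsilon_2 > 0$ such that $\|T_a(j\omega, \vec r)\|_\infty \leq \interleave T_a(j\w,\vec\tau)\interleave_\infty + \delta$ for every $\vec r \in \mathcal{B}(\vec\tau, \epsilon_2) \cap (\R^+)^m$. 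For $\vec r \in \mathcal{B}(\vec\tau, \min(\epsilon_1, \epsilon_2)) \cap (\R^+)^m$ and $|\omega| > \Omega$, the triangle inequality then gives $\sigma_1(T(j\omega, \vec r)) < \interleave T_a(j\w,\vec\tau)\interleave_\infty + 2\delta < \xi$, so no singular value can equal $\xi$ at such frequencies.

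For the count, I would exploit analyticity in $\omega$. The frequencies satisfying $\sigma_k(T(j\omega, \vec r)) = \xi$ are precisely the real zeros of an entire function $g(\cdot, \vec r)$, obtained by clearing denominators in $\det(\xi^2 I - T(\cdot, \vec r)^* T(\cdot, \vec r))$, or equivalently the imaginary-axis generalized eigenvalues of the Hamiltonian-type pencil in (\ref{pencil}) with the discretization replaced by the original delay operators. Since the $A_i$ enter only through the exponentials $e^{-\lambda \tau_i}$, $g$ is a quasi-polynomial in $\omega$ and jointly continuous in $(\omega, \vec r)$. The first part of the proof shows $g(\omega, \vec\tau) \neq 0$ for $|\omega| > \Omega$, so $g(\cdot, \vec\tau)$ is not identically zero; hence its zeros in $[-\Omega, \Omega]$ are isolated and finite in number. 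To upgrade this to a uniform bound, I would choose a compact complex neighborhood $K$ of $[-\Omega, \Omega]$ whose boundary avoids the zeros of $g(\cdot, \vec\tau)$; then $|g(\cdot, \vec\tau)|$ has a positive minimum on $\partial K$, and by joint continuity this lower bound is preserved for $\vec r$ close enough to $\vec\tau$. Rouch\'e's theorem then guarantees that the number of complex zeros of $g(\cdot, \vec r)$ inside $K$ equals the finite number $N$ of zeros of $g(\cdot, \vec\tau)$ there, so the number of real zeros in $[-\Omega, \Omega]$ is at most $N$; shrinking $\epsilon$ accordingly completes the argument.

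The main obstacle I expect is the uniformity of $N$. The confinement to $[-\Omega,\Omega]$ follows rather directly from Lemma~\ref{propconverge2} and the definition of $\interleave T_a\interleave_\infty$, and finiteness for a single $\vec r$ is immediate from the analyticity of $g(\cdot, \vec r)$. The delicate step is that, as the delays are perturbed, real zeros of $g$ can split, coalesce, or migrate on and off the real axis --- behavior that cannot be tracked on $\R$ alone but is cleanly over-counted by the complex zeros in a neighborhood of $[-\Omega, \Omega]$, which is the role played by Rouch\'e's theorem.
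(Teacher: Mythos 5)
Your proof follows essentially the same route as the paper's: the confinement $|\omega^{(i)}|<\Omega$ comes from Lemma~\ref{propconverge2} combined with the fact that $\|T_a(j\w,\vec r)\|_\infty$ stays close to the delay-independent quantity $\interleave T_a(j\w,\vec \tau)\interleave_\infty$ for $\vec r$ near $\vec\tau$, and the level-crossing frequencies are identified with imaginary-axis zeros of the analytic Hamiltonian determinant (\ref{ham}), which has only finitely many zeros in a compact set. The one place you go beyond the paper is the uniformity of $N$: the paper's proof invokes only the finiteness of the zeros of an analytic function on a compact set, which bounds the count for each fixed $\vec r$ separately, whereas your Rouch\'e argument on a compact complex neighbourhood of the segment $j[-\Omega,\Omega]$ delivers a single $N$ valid for all $\vec r$ in a sufficiently small ball, which is what the lemma actually asserts. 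That extra step is correct (joint continuity of the quasi-polynomial in $(\lambda,\vec r)$ gives the required uniform closeness on the boundary of the neighbourhood) and patches a point the paper leaves implicit; the only cosmetic caveat is that $\det\left(\xi^2 I - T(\lambda)^* T(\lambda)\right)$ is not itself analytic in $\lambda$, so one should work, as you note in your ``equivalently'' clause and as the paper does, with the Hamiltonian determinant or with $T(-\lambda)^T T(\lambda)$.
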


\begin{proof}
 For any (fixed) value of $\xi>0$ and delays $\vec r$,  the relation
 \begin{equation}\label{omhold}
 \sigma_k\left(T(j\w),\vec r\right)=\xi
\end{equation}
 holds for some $\omega\in\mathbb{R}$ and $k\in\{1,\ldots,n\}$ if and only if $\lambda=j\omega$ is a zero of the function
\begin{equation}\label{ham}
\det\left(\left[\begin{array}{cc}
\lambda E-A_0-\sum_{i=1}^m A_i e^{-\lambda r_i} & -\frac{1}{\xi}B B^T \\
\frac{1}{\xi}CC^T & \lambda E^T+A_0^T +\sum_{i=1}^m A_i^T e^{\lambda r_i}
\end{array}\right]\right).
\end{equation}
This result is a variant of Lemma~2.1 of \cite{wimsimax} to which we refer for the proof.

Now take $\xi>\interleave T_a(j\w,\vec \tau)\interleave_\infty$. From Lemma~\ref{propconverge2}, and taking into account that $\interleave T_a(j\w,\vec \tau)\interleave_\infty$ does not depend on $\vec\tau$ (see Proposition~\ref{prop:Tasinfprop}) it follows that there exists numbers  $\epsilon>0$ and $\Omega>0$ such that all $\omega$ satisfying (\ref{omhold}) for some $k\in\{1,\ldots,n\}$ and $\vec r\in\mathcal{B}(\vec\tau,\epsilon)\cap(\mathbb{R}^+)^m$ also satisfy
$
|\omega|< \Omega.
$
This proves one statement. At the same time $\lambda=j\omega$ must be a zero of the analytic function (\ref{ham}). The other statement is due to the fact that an analytic function only has finitely many zeros in a compact set.
\end{proof}

\begin{lemma}\label{lem3ap}
The following implication holds
\[
\| T(j\w,\vec \tau)\|_{\infty} \leq  \interleave T_a(j\w,\vec \tau)\interleave_\infty \ \Rightarrow
\interleave T(j\w,\vec \tau)\interleave_{\infty} = \interleave T_a(j\w,\vec \tau)\interleave_\infty.
\]
\end{lemma}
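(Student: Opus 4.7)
My proposal is to prove the lemma by establishing both inequalities separately. The reverse inequality $\interleave T(j\w,\vec \tau)\interleave_{\infty} \geq \interleave T_a(j\w,\vec \tau)\interleave_\infty$ in fact holds \emph{unconditionally}, so only the forward inequality $\interleave T(j\w,\vec \tau)\interleave_{\infty} \leq \interleave T_a(j\w,\vec \tau)\interleave_\infty$ uses the hypothesis.

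For the unconditional inequality $\interleave T\interleave_\infty \geq \interleave T_a \interleave_\infty$, the plan is to exploit that rationally independent delays are dense. Pick any $\epsilon>0$; by rational independence choose $\vec r_\epsilon\in\mathcal{B}(\vec\tau,\epsilon)\cap(\RR^+)^m$ with rationally independent components. By assertion~3 of Proposition~\ref{prop:Tasinfprop}, $\|T_a(j\w,\vec r_\epsilon)\|_\infty=\interleave T_a(j\w,\vec\tau)\interleave_\infty$. Using the density argument from the proof of that proposition (applied to a maximizer $\vec\theta^*$ of $\sigma_1(\mathbb T_a(\vec\theta))$) together with the fact that the trajectory $\omega\mapsto(e^{-j\omega r_{\epsilon,1}},\ldots,e^{-j\omega r_{\epsilon,m}})$ is dense and recurrent on the torus, I can produce a sequence $\w_n\to\infty$ with $\sigma_1(T_a(j\w_n,\vec r_\epsilon))\to \interleave T_a\interleave_\infty$. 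Combining with Proposition~\ref{propconverge} (which forces $\sigma_1(T(j\w_n,\vec r_\epsilon)-T_a(j\w_n,\vec r_\epsilon))\to 0$ as $\w_n\to\infty$) gives $\|T(j\w,\vec r_\epsilon)\|_\infty\geq\interleave T_a\interleave_\infty$. Taking the sup over $\vec r_\epsilon$ and letting $\epsilon\to0^+$ yields the claim.

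For the forward inequality under the hypothesis, I would split the frequency axis into a compact and an unbounded piece and exploit the uniform high-frequency estimate from Lemma~\ref{propconverge2}. Fix $\delta>0$. By Lemma~\ref{propconverge2} there exist $\Omega>0$ and $\epsilon_1>0$ such that
\[
\sigma_1(T(j\w,\vec r))\leq \sigma_1(T_a(j\w,\vec r))+\delta \leq \interleave T_a\interleave_\infty+\delta,
\qquad |\w|>\Omega,\ \vec r\in\mathcal{B}(\vec\tau,\epsilon_1)\cap(\RR^+)^m,
\]
where the second inequality uses assertion~2 of Proposition~\ref{prop:Tasinfprop} (so $\|T_a(\cdot,\vec r)\|_\infty\leq\interleave T_a\interleave_\infty$ for every $\vec r$). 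On the compact window $|\w|\leq\Omega$ the map $\vec r\mapsto \max_{|\w|\leq\Omega}\sigma_1(T(j\w,\vec r))$ is continuous (this is precisely the observation made just after Example~\ref{ex:TandTa}), so there exists $\epsilon_2>0$ such that for all $\vec r\in\mathcal{B}(\vec\tau,\epsilon_2)\cap(\RR^+)^m$,
\[
\max_{|\w|\leq\Omega}\sigma_1(T(j\w,\vec r))\leq \|T(j\w,\vec\tau)\|_\infty+\delta \leq \interleave T_a\interleave_\infty+\delta,
\]
where the last step uses the hypothesis of the lemma.

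Combining both ranges, $\|T(j\w,\vec r)\|_\infty\leq \interleave T_a\interleave_\infty+\delta$ for all $\vec r\in\mathcal{B}(\vec\tau,\min(\epsilon_1,\epsilon_2))\cap(\RR^+)^m$. Taking first the supremum over $\vec r$, then $\epsilon\to 0^+$ in Definition of the strong $\Hi$ norm of $T$, and finally $\delta\to 0^+$, yields $\interleave T\interleave_\infty \leq \interleave T_a\interleave_\infty$, completing the proof. The main obstacle, I expect, is the high-frequency step: the naive version of Proposition~\ref{propconverge} gives a bound for each fixed $\vec\tau$ only, and the argument really needs the \emph{uniform} version from Lemma~\ref{propconverge2}, which in turn relies on strong exponential stability (Assumption~\ref{assumption_sstab}) to guarantee that the bound (\ref{normA22}) varies continuously with the delays.
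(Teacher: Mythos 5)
Your proposal is correct, and it diverges from the paper's own proof in an instructive way in its second half. For the unconditional lower bound $\interleave T(j\w,\vec \tau)\interleave_{\infty} \geq \interleave T_a(j\w,\vec \tau)\interleave_\infty$ the two arguments are close in spirit: both perturb the delays so that the near-supremum of $\sigma_1(T_a)$ recurs at arbitrarily high frequencies and then transfer it to $T$ via the high-frequency convergence of $T$ to $T_a$. The paper does this with \emph{commensurate} delays $\vec\tau_r=(n_1/s,\ldots,n_m/s)$, exploiting exact periodicity $T_a(j\omega_0,\vec\tau_r)=T_a(j(\omega_0+2\pi s k),\vec\tau_r)$, whereas you use rationally \emph{independent} delays and the density/recurrence of the torus trajectory from the proof of Proposition~\ref{prop:Tasinfprop}; both work, though in your version you should invoke the uniform Lemma~\ref{propconverge2} (or at least note that Proposition~\ref{propconverge} applies at the perturbed, still strongly stable, delay vector $\vec r_\epsilon$) rather than the nominal-delay statement of Proposition~\ref{propconverge}. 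For the upper bound under the hypothesis, the paper argues via Lemma~\ref{lem:fininter}: for a level $\xi>\interleave T_a\interleave_\infty$ the imaginary-axis zeros of the analytic Hamiltonian-type determinant (\ref{ham}) are confined to a compact set uniformly in $\vec r$ near $\vec\tau$, there are none at $\vec r=\vec\tau$, and hence none for all nearby $\vec r$, so no singular value of $T(j\omega,\vec r)$ ever reaches $\xi$; letting $\xi\downarrow\interleave T_a\interleave_\infty$ finishes. Your compact-plus-tail frequency split with the uniform estimate of Lemma~\ref{propconverge2} on $|\w|>\Omega$, assertion~2 (plus delay-independence) of Proposition~\ref{prop:Tasinfprop}, and continuity of $\vec r\mapsto\max_{|\w|\leq\Omega}\sigma_1(T(j\w,\vec r))$ on the compact window reaches the same conclusion by a more elementary and self-contained $\epsilon$--$\delta$ argument, at the price of not reusing Lemma~\ref{lem:fininter}, which the paper needs anyway for the level-set algorithm. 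Your closing remark correctly identifies the crux: the uniformity in $\vec r$ of the high-frequency bound, which rests on strong exponential stability through the continuity of (\ref{normA22}) in the delays.
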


\begin{proof}
For every $\epsilon>0$ there exist delays $\vec\tau_0$ and a frequency $\omega_0$ such that
\[
\|\vec\tau_0-\vec\tau\|<\epsilon/2,\ \ \ \sigma_{1}\left(T_a(j\omega_0,\vec\tau_0)\right)\geq \interleave T_a(j\w,\vec \tau)\interleave_\infty-\epsilon/2.
\]
In addition, there exist commensurate delays
\begin{equation}\label{comdelays}
\vec\tau_r=\left(\frac{n_1}{s},\ldots,\frac{n_m}{s}\right),
\end{equation}
with $(n_1,\ldots,n_m,s)\in\mathbb{N}^{m+1}$ such that
\[
\|\vec\tau_r-\vec\tau_0\|<\epsilon/2,\ \ \ \left|\sigma_{1}\left(T_a(j\omega_0,\vec\tau_r)\right)-
\sigma_{1}\left(T_a(j\omega_0,\vec\tau_0)\right)\right|\leq\epsilon/2.
\]
Thus, for all $\epsilon>0$ there exist commensurate delays (\ref{comdelays}) and a frequency $\omega_0$ satisfying
\[
 \|\vec\tau_r-\vec\tau\|<\epsilon,\ \
\sigma_{1}\left(T_a(j\omega_0,\vec\tau_r)\right)\geq \interleave T_a(j\w,\vec \tau)\interleave_\infty-\epsilon.
\]
From the fact that
\[
T_a(j\omega_0,\vec\tau_r)=T_a\left(j(\omega_0+2\pi s k) ,\vec\tau_r\right)
\]
for all $k\geq 1$ and Lemma~\ref{propconverge2}, we conclude that
\begin{equation}\label{leftright}
\interleave T(j\w,\vec \tau)\interleave_{\infty} \geq  \interleave T_a(j\w,\vec \tau)\interleave_\infty.
\end{equation}

Now take a level $\xi> \interleave T_a(j\w,\vec \tau)\interleave_\infty$, and let $\epsilon$ and $\Omega$ be determined by the assertion of Lemma~\ref{lem:fininter}. From the assumption $\| T(j\w,\vec \tau)\|_{\infty} \leq  \interleave T_a(j\w,\vec \tau)\interleave_\infty$  and the relation between (\ref{omhold}) and (\ref{ham}) it follows that the function (\ref{ham}) has no zeros on the imaginary axis for $\vec r=\vec\tau$. Because the function  (\ref{ham}) is analytic and all potential imaginary axis zeros have modulus smaller than $\Omega$ whenever $\vec r\in\mathcal{B}(\vec\tau,\epsilon)\cap(\mathbb{R}^+)^m$, we conclude that there exists a number $\epsilon_2>0$ such that the function (\ref{ham}) has no imaginary axis eigenvalues whenever $\vec r\in\mathcal{B}(\vec\tau,\epsilon_2)\cap(\mathbb{R}^+)^m$. Equivalently, $T(j\omega,\vec r)$ has no singular values equal to $\xi$ whenever $\vec r\in\mathcal{B}(\vec\tau,\epsilon_2)\cap(\mathbb{R}^+)^m$.  This proves that the  left and the right hand side of (\ref{leftright}) are equal.
\end{proof}

\section{Finite dimensional approximation} \label{sec:findimapp}

We start by reformulating the system (\ref{system}) as an infinite-dimensional linear system, inspired by \cite{curtainzwart}. When defining the Hilbert space $X:=\C^n\times \mathcal{L}_2([-\tau_{\max},0],\C^n)$ equipped with the
inner product
\[
<(y_0,y_1),(z_0,z_1)>_X=<y_0,z_0>_{\C^n}+<y_1,z_1>_{\mathcal{L}_2},
\] where $\tau_{\max}:=\max(\tau_1,\ldots,\tau_m)$, we can rewrite (\ref{system}) as
 \begin{eqnarray} \label{eq:ODEA}
\mathcal{E}\dot{z}(t)&=&\mathcal{A}z(t)+\mathcal{B}u(t),\\
\nonumber y(t)&=&\mathcal{C}z(t),
\end{eqnarray} where
\begin{multline} 
 \mathcal{D}(\mathcal{A}) =\{z=(z_1,z_0)\in X: z_1
\mathrm{\ is\ absolutely\  continuous} \\
 \mathrm{on\ } [-\tau_{\max},0], \frac{dz_1}{d\theta}\in\mathcal{C}([-\tau_{\max},0],\C^n), z_0=z_1(0)\}, \\
\end{multline}
\begin{eqnarray}
\nonumber \mathcal{E}z&=&\left(\begin{array}{c}
         z_1\\
         Ez_0
         \end{array}\right), \hspace{1.4cm}
          \mathcal{A}z=\left(\begin{array}{c}
         \frac{dz_1}{d\theta}(.)\\
         A_{0}z_0+ \sum_{i=1}^m A_i z_1(-\tau_i)
         \end{array}\right), z\in\mathcal{D}(\mathcal{A}), \\
\nonumber \mathcal{B}u&=&\left(\begin{array}{c}
        0\\
        Bu
        \end{array}
    \right), u\in\C^{n\times n_u},\  \ \mathcal{C}z=Cz_0,\  z\in X.
\end{eqnarray}
The connection between (\ref{system})
and (\ref{eq:ODEA}) is that $z_0(t)\equiv x(t)$,
$z_1(t)\equiv x(t+\theta), \theta\in[-\tau_{\max},0]$.

Next, we discretize the infinite-dimensional system
(\ref{eq:ODEA}). We use a spectral method, as in
\cite{breda,breda:nonlocal}. Given a positive integer $N$,
we consider a mesh $\Omega_N$ of $N+1$ distinct points in
the interval $[-\tau_{\max},\ 0]$,
\begin{equation}\label{defmesh}
\Omega_N=\left\{\theta_{N,i},\ i=-N,\ldots,0\right\},
\end{equation}
where we assume that $\theta_{N,0}=0$. With the Lagrange
polynomials $l_{N,k}$ defined as real valued polynomials
of degree $N$ satisfying
\[
l_{N,k}(\theta_{N,i})=\left\{\begin{array}{ll}1 & i=k\\
0 & i\neq k
\end{array}\right.
\]
for $i,k\in\{-N,\ldots,0\}$, we can, similarly as in \cite{breda}, approximate (\ref{eq:ODEA}) and, hence, (\ref{system}), by the
finite-dimensional system:
\begin{eqnarray}\label{finsystem}
{\bf E_N}\dot z(t)&=&{\bf A_N} z(t)+{\bf B_N} u(t),\ z(t)\in\R^{(N+1)n\times 1} \\
y(t)&=&{\bf C_N}z(t)
\end{eqnarray}
where
\begin{eqnarray}
\label{ANBN} {\bf A_N}&=&\left[\begin{array}{llll}
d_{-N,-N} I_n &\hdots & d_{-N,-1} I_n & d_{-N,0} I_n \\
\vdots & &  \vdots & \vdots \\
d_{-1,-N} I_n &\hdots & d_{-1,-1} I_n  & d_{-1,0} I_n \\
\Gamma_{-N} & \hdots & \Gamma_{-1} & \Gamma_0
\end{array}\right],\
{\bf B_N}=\left[\begin{array}{l}
0 \\
\vdots \\
0 \\
B \\
\end{array}\right] \\
{\bf E_N}&=&
\left[\begin{array}{ll}
I_{nN} & 0_{nN\times n} \\
0_{n\times nN}  & E \\
\end{array}\right],\
{\bf C_N}=\left[\begin{array}{llll}
0 & \ldots & 0 & C
\end{array}\right]
\end{eqnarray} and
\[
\begin{array}{lll}
\Gamma_0&=& A_0+\sum_{l=1}^m A_l l_{N,0}(-\tau_l),\\
\Gamma_{k}&=&\sum_{l=1}^m A_l l_{N,k}(-\tau_l),\ \ \
k\in\{-N,\ldots,-1\}, \\
d_{i,k}&=&l^{\prime}_{N,k}(\theta_{N,i}) ,\ \ \ \
i,k\in\{-N,\ldots,0\}.
\end{array}
\]

The transfer function of (\ref{finsystem}) is given by (\ref{finite2}).
Using the arguments as spelled out in \cite{suataut,wimsimax} it can be shown that the effect of the approximation of
(\ref{T}) by (\ref{finite2}) can be interpreted as the effect of approximating  the exponential functions in (\ref{T}) by rational functions. This interpretation can be used to estimate the frequency interval where the main peaks in the singular value plot occur (see \cite[\S4.3]{wimsimax}).

\section{Numerical data in Section \ref{sec:collection}} \label{sec:numdata72}
The state-space equations of the numerical examples in Section \ref{sec:collection} are as follows.

\subsection{Example $4$ in \cite{Fridman:2002:DESCRIPTOR}}
\begin{align*}
\dot{x}_1(t)&=-x_1(t-0.999)-x_2(t-0.999)+w(t), \\
\dot{x}_2(t)&=x_2(t)-0.9x_2(t-0.999)+w(t)+u(t), \\
z_1(t)&=x_2(t),\quad z_2(t)=0.1u(t),\quad  y_1(t)=x_1(t),\quad y_2(t)=x_2(t).
\end{align*}
\subsection{Example $1$ in \cite{Fridman1998SCL}}
\begin{align*}
\dot{x}_1(t)&=2x_1(t)+x_2(t)-x_1(t-0.1)-0.5w(t)+3u(t), \\
\dot{x}_2(t)&=-x_2(t)-x_1(t-0.1)+x_2(t-0.1)+w(t)+u(t), \\
z_1(t)&=x_1(t)-0.5x_2(t),\quad z_2(t)=u(t),\quad  y_1(t)=x_1(t),\quad y_2(t)=x_2(t).
\end{align*}
\subsection{Example $2$ in \cite{fridman}}
\begin{align*}
\dot{x}_1(t)&=-x_1(t-1.2)+w(t)-0.5u(t),\quad 0=x_1(t-1.2)-x_2(t-1.2)+w(t)+u(t), \\
z_1(t)&=x_1(t)+0.2x_2(t)+0.1u(t),\quad  y_1(t)=x_1(t),\quad y_2(t)=x_2(t).
\end{align*}
\subsection{Example $3$ in \cite{fridman}}
\begin{align*}
\dot{x}_1(t)&=-x_1(t-1.2)+w_1(t),\quad 0=x_2(t)+x_1(t-1.2)-x_2(t-1.2)+w_1(t)+u(t), \\
z_1(t)&=x_1(t)+0.2x_2(t)+0.1u(t),\quad  y_1(t)=x_1(t)+0.1w_2(t).
\end{align*}
\subsection{Example $4$ in \cite{fridman}}
\begin{align*}
\dot{x}_1(t)&=-x_1(t-h)-x_2(t-h)+w_1(t),\quad \dot{x}_2(t)=x_2(t)-0.9x_2(t-h)+w_1(t)+u(t), \\
z_1(t)&=x_2(t),\quad z_2(t)=0.1u(t),\quad y_1(t)=x_2(t)+0.1w_2(t).
\end{align*}
\subsection{Example in \cite{robustpaper}}
\[
\dot{x}(t)=Ax(t)+Bw(t)+Bu(t-5),\quad z(t)=I_3x(t),\quad y(t)=I_3x(t)
\] where
{\small
\[
\begin{array}{cc}
A=\left(\begin{array}{ccc}
  -0.08 & -0.03 & 0.2 \\
    0.2 & -0.04 & -0.005 \\
  -0.06 & 0.2   & -0.07
\end{array}\right), &
B=\left(\begin{array}{c}
  -0.1 \\
  -0.2 \\
  0.1
\end{array}\right).
\end{array}
\]}
\subsection{Example in \cite{quasidirect}}
\[
\dot{x}(t)=A_0+\sum_{i=1}^5 A_ix(t-h_i)+I_{11}w(t)+Bu(t-7),\quad z(t)=I_{11}x(t),\quad y(t)=I_{11}x(t)
\] where the time-delays are $h_1=3$, $h_2=5$, $h_3=15$, $h_4=23$, $h_5=29$. The system matrices $A_i$ for $i=0,\ldots,6$ are real-valued $11\times11$ matrices. The $(k,l)$ non-zero element of $A_i$th matrix is denoted by $A_i^{(k,l)}$ and the numerical values are $A_0^{(1,1),(3,3),(5,5),(9,9)}=-0.2$, $A_0^{(4,7),(4,8),(8,3),(8,4)}=0.1417$, $A_0^{(2,2)}=-0.04$, $A_0^{(6,6)}=-0.0588$, $A_0^{(6,6)}=-1$, $A_0^{(10,10)}=-0.0667$, $A_0^{(4,3)}=A_0^{8,7}=0.1917$,
$A_0^{(4,4)}=A_0^{(8,8)}=-0.04$, $A_1^{(5,4)}=0.195$, $A_2^{(3,2)}=0.1966$, $A_2^{(6,5)}=0.0529$, $A_2^{(9,8)}=0.194$, $A_2^{(10,9)}=0.0613$, $A_3^{1,6}=0.1946$, $A_4^{2,1}=0.0384$, $A_5^{7,7}=-0.0159$.

\subsection{Example $2$ in \cite{bfgbookchapter}}

{\scriptsize
\begin{align*}
\dot{x}(t)&=\left(
\begin{array}{cccc}
   -4.4656  &  -0.4271 &   0.4427 &  -0.1854 \\
   -0.8601  & -5.6257  &  0.8577  & -0.5210 \\
    0.9001  & -0.7177  & -6.5358  &  0.0417 \\
   -0.6836  &  0.0242  &  0.4997  & -3.5618
\end{array}
\right)x(t)+\left(
\begin{array}{cccc}
    0.6848 &   -0.0618 &   0.5399 &   0.5057 \\
    0.3259 & -0.3810  &  0.6592  & -0.0066 \\
    0.6325 &   0.3752  &  0.4122  &  0.7303 \\
    0.5878  &  0.9737  &  0.1907 &  -0.8639
\end{array}
\right) \\
&x(t-3.2)+
\left(
\begin{array}{cccc}
    0.9371 &  -0.7859  &  0.1332 &   0.7429 \\
   -0.8025 &   0.4483  &  0.6226  &  0.0152 \\
    0.0940  &  0.2274  &  0.1536  &  0.5776 \\
   -0.1941  &  0.5659  &  0.8881 &  -0.0539
\end{array}\right)x(t-3.4)+
\left(
\begin{array}{cc}
    1    &     0 \\
   -1.6  & 1 \\
         0   &      0 \\
         0    &     0
\end{array}
\right)w(t) \\
&+
\left(
\begin{array}{cccc}
    0.6576 &  -0.8543 &  -0.3460 &   0.6415 \\
   -0.3550 &   0.5024 &   0.6081 &   0.9038 \\
    0.9523  &  0.6624 &   0.0765 &  -0.8475 \\
   -0.4436  &  0.8447 &  -0.0734 &   0.4173 \\
\end{array}
\right)x(t-3.9)+
\left(
\begin{array}{cc}
    0.2 \\
   -1 \\
    0.1 \\
   -0.4 \\
\end{array}
\right)u(t-0.2) \\
z(t)&=
\left(
\begin{array}{cccc}
     1 &    0 &    0 &   -1 \\
     0 &   -1 &    1 &    0
\end{array}
\right)x(t)
+
\left(
\begin{array}{cc}
    0.1 & 1 \\
   -1 &  0.2
\end{array}
\right) w(t)
+
\left(
\begin{array}{c}
    1 \\
   -1
\end{array}
\right) u(t) \\
y(t)&=
\left(
\begin{array}{cccc}
     1 &    0 &    -1 &   0
\end{array}
\right)x(t)
+
\left(
\begin{array}{cc}
    -2 & 0.1
\end{array}
\right) w(t)
+
0.4 u(t-0.2)
\end{align*}
}
\end{document}